\documentclass[12pt]{amsart}

\usepackage[T1]{fontenc}

\usepackage{amssymb}
\usepackage{array}
\usepackage[english]{babel}
\usepackage{nicefrac}
\usepackage{setspace} 
\usepackage[a4paper,tmargin=1in,bmargin=1.2in ,lmargin=1in,rmargin=1in,headheight=1in,headsep=1in,footskip=1.5\baselineskip]{geometry}

\usepackage{natbib}
\usepackage{amssymb}
\usepackage{fancyhdr}
\usepackage{bbm}
\usepackage{xcolor}
\usepackage{hyperref}
\hypersetup{
  colorlinks=true,
  linkcolor=blue,
  citecolor=blue
}
\usepackage{mleftright}
\usepackage{subcaption}
\usepackage{paralist}
\usepackage{comment}

\usepackage{caption}
\usepackage[labelfont=md]{subcaption}
\usepackage[normalem]{ulem}

\newtheorem{theorem}{Theorem}
\newtheorem*{theorem*}{Theorem}
\newtheorem{lemma}{Lemma}
\newtheorem{proposition}{Proposition}

\newtheorem{example}{Example}

\theoremstyle{definition}
\newtheorem{definition}{Definition}
\newtheorem*{definition*}{Definition}
\newtheorem{remark}{Remark}
\newtheorem*{lemma*}{Lemma}
\usepackage{graphicx}
\usepackage{enumerate}

\numberwithin{equation}{section}

\newcommand{\PP}{\mathbb{P}}
\newcommand{\EE}{\mathbb{E}}

\usepackage{xparse}
\DeclareDocumentCommand\Pr{ m g }{\ensuremath{
    {   \IfNoValueTF {#2}
      {\mathbb{P}\mleft[{#1}\mright]}
      {\mathbb{P}\mleft[{#1}\middle\vert{#2}\mright]}%
    }
}}
\DeclareDocumentCommand\E{ m g }{\ensuremath{
    {   \IfNoValueTF {#2}
      {\mathbb{E}\mleft[{#1}\mright]}
      {\mathbb{E}\mleft[{#1}\middle\vert{#2}\mright]}%
    }
}}

\usepackage[disable]{todonotes}
\usepackage{mathtools}

\usepackage{cleveref}

\usepackage{comment}
\usepackage{tikz-cd}
\usepackage{bm}
\usepackage[T1]{fontenc}
\usepackage{multirow}
\usepackage{pgfplots}
\usepackage{float}
\usepackage{subcaption}
\usepgfplotslibrary{fillbetween}
\pgfplotsset{compat=1.17}

\title{Dynamic Cheap Talk without Feedback}
\author{Atulya Jain}

 \thanks{Department of Economics, University of Bonn, Germany. Email: \href{mailto:ajain@uni-bonn.de}{ajain@uni-bonn.de}}

\thanks{I am indebted to  Tristan Tomala and Nicolas Vieille for their advice and encouragement throughout the project. I also thank  Sarah Auster, Francesc Dilme and Andreas Kleiner for  helpful suggestions and comments. I acknowledge funding from the DATAIA Convergence Institute (ANR-17-CONV-0003), a joint initiative of HEC Paris and Hi! PARIS, from the Programme d’Investissement d’Avenir (ANR-18-EURE-0005 / EUR DATA EFM), and from the German Research Foundation (DFG) under Germany’s Excellence Strategy – EXC 2126/2–390838866.}

\date{\today}

\begin{document}

\begin{abstract}
  We study a dynamic sender-receiver game in which the sender observes a state evolving according to a Markov chain but does not observe the receiver's action. Despite the absence of feedback, dynamic interaction partially restores commitment. We show that any equilibrium payoff of a persuasion model with partial commitment\textemdash where the sender can deviate to signaling policies that  preserve the marginal distribution over messages\textemdash can be achieved as a uniform equilibrium payoff in the dynamic game. Moreover, any convex combination of such payoffs across message distributions can also be attained. When the sender's payoff is state-independent, she achieves the Bayesian persuasion payoff.

  \medskip

  \noindent\textbf{JEL Classification:} C73, D82, D83.

\noindent\textbf{Keywords:}  cheap talk, dynamic games, Bayesian persuasion,  uniform equilibrium.

    \end{abstract}

\maketitle

\vspace{-5mm}

\section{Introduction}

Communication between an informed expert and an uninformed decision-maker provides a natural framework for studying strategic information transmission. When their interests conflict, the expert may distort messages to steer the outcome in her favor, while the decision-maker acts based on his own interests. This limits the amount of information that can be transmitted and restricts the set of achievable outcomes. 

In many such relationships, decisions are made repeatedly over time. This dynamic interaction may help overcome this friction. When uncertainty evolves, information revealed today remains useful but decays over time, creating scope for intertemporal incentives. Such incentives are typically sustained through feedback, with future communication used to reward or punish behavior.

This carrot-and-stick argument, however, relies on the expert observing the decision-maker's actions. Yet, in many situations, the expert receives little or no feedback. For instance, a financial advisor may recommend portfolio allocations to an investor without observing whether they are followed, and a doctor may prescribe a treatment to a patient without observing compliance. Can credible communication arise without such feedback? If so, what are its limits?

We consider a dynamic sender-receiver  model in which the state evolves according to a Markov chain. In each period, the sender privately observes the state and sends a message to the receiver, who then chooses an action. The sender gets no feedback: she observes neither the receiver’s action nor her realized  payoff.  The players' stage  payoffs depend only on the current state and the receiver's action.

Despite the absence of feedback, dynamic interaction improves communication. Although the sender does not observe the receiver’s actions, the receiver observes the history of messages and can monitor their empirical frequencies. This
allows him to perform statistical tests on the distribution over messages and detect
deviations, thereby disciplining the sender’s behavior. As a result, the sender
faces endogenous restrictions on her deviations in the dynamic game, expanding the set of achievable outcomes beyond those in the one-shot game.

We focus on uniform equilibrium as the solution concept. A payoff vector is a uniform equilibrium payoff if, for every $\epsilon>0$, there is a strategy profile such that, for all sufficiently patient players, it is an $\epsilon$-equilibrium and induces a payoff vector within $\epsilon$ of the target. The key requirement is that, once $\epsilon$ is fixed, the same strategy profile works for all sufficiently high discount factors. This notion is well suited to our environment because players do not observe realized payoffs, which therefore cannot be used to discipline future play. It captures long-run incentives while allowing for arbitrarily small departures from exact optimality.


Our main result (\Cref{th:main}) shows that any equilibrium payoff of a persuasion model with partial commitment can be attained as a uniform equilibrium payoff in the dynamic game. In the static model, the receiver best responds to his posterior beliefs, while the sender can deviate only to signaling policies that preserve the marginal distribution over messages. Moreover, any convex combination of such payoffs across message distributions can also be attained. This static persuasion model is closely related to \citet{lin2024credible}, where the restriction on the sender’s deviations is imposed exogenously. Our result shows how this restriction emerges endogenously through dynamic interaction.

The intuition is as follows. Consider an equilibrium of the static  model with a given marginal distribution over messages. The game is played in blocks. In each period, the sender induces the same posterior beliefs as in the static equilibrium. This is feasible under our pseudo-renewal assumption, which ensures that the construction can be carried out recursively.\footnote{The equilibrium  construction also applies to general Markov chains, provided the induced posterior beliefs are closed under the Markov update; see \Cref{sec:pseudo-renewal}.} To enforce the target distribution over messages, the receiver monitors message frequencies within each block and follows the equilibrium response rule as long as these remain within their quotas. Otherwise, he plays the action corresponding to a different message whose quota has not yet been filled. This ensures that, within each block, the frequency of each message used matches its quota. For sufficiently long blocks, quotas bind only in a small fraction of periods toward the end of each block. As a result, the empirical distribution over states and actions can be made arbitrarily close to the target distribution. Consequently, the sender’s deviations in the dynamic game are effectively restricted to those that preserve the marginal distribution over messages. By the equilibrium conditions of the static model, the induced strategy profile is an approximate equilibrium of the dynamic game. Convex combinations are obtained by varying the target distributions across sub-blocks.

Thus, dynamic communication without feedback partially restores commitment, improving the sender's payoff relative to static cheap talk. In particular, when the sender's payoff is state-independent, she can fully bridge the commitment gap and achieve the Bayesian persuasion payoff, the upper bound in this setting.\footnote{This is also the sender's payoff upper bound in the dynamic game if she could commit to a dynamic signaling policy.} This contrasts with \citet{KUVALEKARetal}, who study the i.i.d.\ setting and show that, under Perfect Bayesian equilibrium, repetition cannot close a static commitment gap. The key distinction is that uniform equilibrium permits quota enforcement: message frequencies can be used to discipline the sender's deviations. This allows the sender to attain the Bayesian persuasion payoff in the state-independent case.

The previous result identifies a tractable subset of uniform equilibrium payoffs. However, in general, this is only a partial characterization. While the result builds on deviations that preserve the marginal distribution over messages, correlation across periods allows for richer statistical tests, making some such deviations detectable. \Cref{ex:counter-example} shows that the converse fails: uniform equilibrium payoffs need not lie in the convex hull of partial commitment payoffs. Nevertheless, Proposition~\ref{prop:necessary} provides a necessary condition: any uniform equilibrium payoff must be induced by an outcome that satisfies receiver obedience and is robust to a class of undetectable sender deviations based on permutations of the state process, following \citet{rsv}.  Finally, \Cref{prop:iid} shows that when states are i.i.d., all deviations that preserve the marginal distribution over messages are undetectable. Hence, the static  benchmark\textemdash the convex hull of equilibrium payoffs of the persuasion model with partial commitment\textemdash is exact, providing  a micro-foundation for  persuasion  with partial commitment.

\medskip

\textbf{Related Literature}: Our paper relates to the growing literature on strategic communication in dynamic environments where the underlying state evolves over time, see, e.g., \cite{rsv, escobar2013efficiency, renault2017optimal, ely2017beeps, margaria2018dynamic, meng2021value, Lehrer2026}. In particular, our paper contributes to the question of whether repeated interaction can substitute for commitment, see, e.g., \citet{KUVALEKARetal, pei2023repeated, mathevet2019reputation, jain2026calibratedforecastingpersuasion}. These papers offer rationales based on coarse summaries of past play, costs of lying, reputation, or statistical tests such as calibration. In contrast, we show that even without feedback, dynamic interaction can partially restore commitment through endogenous restrictions on the distribution over messages.

Our paper is closely related to \cite{KUVALEKARetal}, who also study repeated cheap talk without feedback. They consider an i.i.d.\ environment and characterize Perfect Bayesian equilibrium payoffs via an equivalence to static cheap talk with money burning. In contrast, we consider a Markovian setting and study uniform equilibrium payoffs. Under their solution concept, if the sender cannot attain the Bayesian persuasion payoff in static cheap talk, then she cannot attain it in the repeated game, even as players become patient. Moreover, when her preferences are state-independent, repetition does not improve her payoff. By contrast, we show that she can attain the Bayesian persuasion payoff precisely in this case. We discuss this distinction in \Cref{sec:uniform-pbe}.

Our equilibrium construction is based on quota mechanisms, where agents are required to match prescribed message frequencies; see, e.g., \cite{jackson2007overcoming,rsv,escobar2013efficiency,frankel2016discounted,ball2025quotamechanismsfinitesampleoptimality}. Our approach is closest to \citet{rsv}, where the receiver monitors message frequencies and replaces them when necessary to meet the prescribed quotas.

Our main result links the dynamic game to a static sender-receiver model, connecting our analysis to the literature on communication with partial commitment. In cheap talk \citep{Sobel, green2007two}, the sender cannot commit, while in Bayesian persuasion \citep{KamenicaBP}, she fully commits to a signaling policy. Our static benchmark lies in between: the sender can deviate to any policy that preserves the marginal distribution over messages. This restriction corresponds to the notion of credibility studied by \citet{lin2024credible}. In their model, the sender chooses the marginal distribution and the restriction is imposed exogenously. In contrast, the restriction arises endogenously in our dynamic game from monitoring message frequencies.\footnote{Another difference is that they focus on pure receiver strategies, in which case it is without loss to restrict attention to direct signaling policies.} The dynamic setting also allows convex combinations of equilibrium payoffs across message distributions. More broadly, our formulation relates to work that weakens commitment power in Bayesian persuasion; see, for instance, \citet{min2021Bayesian} and \citet{lipnowski2022persuasion}. 

\medskip

The rest of the paper is organized as follows. In \Cref{sec:Model}, we introduce the dynamic game without feedback and discuss the value of commitment. In \Cref{Sec:maintresults}, we present the persuasion model with partial commitment and provide our main results.  In \Cref{sec:discussion}, we discuss the role of uniform equilibrium and the pseudo-renewal assumption.  In \Cref{sec:conclusion}, we conclude and discuss future work. All omitted proofs are in \Cref{proofs}.

\section{Model}
\label{sec:Model}

We consider a dynamic game  between  two players: the sender (she) and the receiver (he). The state of the world evolves over time in a discrete-time infinite horizon. In each period $n \in \mathbb{N}$,  the sender privately observes the current state $\omega_n \in \Omega$ and sends a message $m_n \in M$ to the receiver. Upon observing the message, the receiver chooses an action $a_n \in A$.  The action is not disclosed to the sender.  The sender's and receiver's payoffs in that period are \(u_S(\omega_n,a_n)\) and \(u_R(\omega_n,a_n)\), respectively.\footnote{Neither player observes realized payoffs. In particular, the sender does not observe the action \(a_n\), while the receiver does not observe the state \(\omega_n\).}

The state evolves according to a Markov chain.   Given the current state $\omega_n$, the state in the next period $\omega_{n+1}$ is drawn  according to the transition matrix $Q(\omega_{n+1}\mid \omega_n)$. We assume the Markov chain   is irreducible and pseudo-renewal.

\begin{definition}
A Markov chain $\{\omega_n\}_{n \in \mathbb{N}}$ is \textbf{pseudo-renewal} if, for all $\omega \neq \tilde{\omega}$, $Q(\omega \mid \tilde{\omega}) = \beta_\omega$, where $\{\beta_\omega\}_{\omega \in \Omega}$ are non-negative numbers satisfying $\sum_{\omega} \beta_\omega \le 1$.
\end{definition}

Under this assumption, for any two distinct states, the transition probability depends only on the destination state and not on the origin state. This condition includes i.i.d.\ processes,  binary-state Markov chains, and  shock-driven processes in which shocks are i.i.d.\ and the state remains constant between shocks.\footnote{\citet{rsv,renault2017optimal} and \citet{horner2017keeping} also study  games with pseudo-renewal chains. Under pseudo-renewal, irreducibility is equivalent to $\beta_\omega>0$ for every $\omega$, which in turn implies aperiodicity.} The assumption is restrictive, but it is used only as a sufficient condition for the implementation. In \Cref{sec:pseudo-renewal}, we return to its role and provide a condition under which the main result extends to arbitrary Markov chains.

Let \(\mu \in \Delta(\Omega)\) denote the unique invariant distribution of the Markov chain. The initial state \(\omega_1\) is drawn according to \(\mu\). The sets of states \(\Omega\), actions \(A\), and messages \(M\) are all finite. We assume that \(|M| \geq \min \{ |\Omega|, |A| \}\).

The players' strategies map their information into choices. For the sender,  a strategy is a sequence of mappings $\sigma=(\sigma_n)_{n\ge1}$, where $\sigma_n:(\Omega \times M)^{n-1}\times \Omega \to \Delta(M)$ specifies a distribution over messages given past states and messages and the current state. 
For the receiver, a strategy is a sequence of mappings $\tau=(\tau_n)_{n\ge1}$, where $\tau_n: (M \times A)^{n-1} \times M \to \Delta (A)$ specifies a distribution over actions given past messages and actions and the current message. Given a strategy profile \((\sigma,\tau)\) and \(\delta \in (0,1)\), the \(\delta\)-discounted payoff of player \(i \in \{S,R\}\) is
 \begin{equation*}
  \gamma^\delta_i(\sigma, \tau)=\mathbb{E}_{\sigma, \tau}\Big[(1-\delta)\sum_{n=1}^\infty \delta^{n-1} u_i(\omega_n, a_n)\Big].
\end{equation*}



\textbf{Uniform equilibrium:} We focus on uniform equilibrium,  a standard solution concept in stochastic and repeated games, especially in settings with imperfect monitoring.\footnote{See, e.g., \citet{Lehrer1990,fudenberg1991approximate,tomala1999nash,vieille2000two,mertens2015repeated,deb2016uniform,solan2022course}.} A payoff vector is a uniform equilibrium payoff if, for every $\epsilon>0$, there exists a single strategy profile that satisfies two requirements for all sufficiently patient players: no player can gain more than $\epsilon$ by deviating, and the induced payoff is within $\epsilon$ of the payoff vector.  This notion is well suited to our environment, where players do not observe realized payoffs and therefore cannot use them to discipline future play. Rather than imposing sequential rationality after every history, it captures long-run behavior up to arbitrarily small departures from exact optimality.

\begin{definition}
A payoff vector $\boldsymbol{\gamma}=(\gamma_S,\gamma_R)\in \mathbb{R}^2$ is a \textbf{uniform equilibrium payoff} if for every $\epsilon>0$, there exist $\delta_0\in(0,1)$ and a strategy profile $(\sigma^*,\tau^*)$ such that for all $\delta \ge \delta_0$,
\begin{align*}
\gamma^\delta_S(\tilde{\sigma},\tau^*) &\le \gamma^\delta_S(\sigma^*,\tau^*)+\epsilon \quad \forall \tilde{\sigma},\\
\gamma^\delta_R(\sigma^*,\tilde{\tau}) &\le \gamma^\delta_R(\sigma^*,\tau^*)+\epsilon \quad \forall \tilde{\tau},
\end{align*}
and
\begin{equation*}
{\|\boldsymbol{\gamma}^\delta(\sigma^*,\tau^*)-\boldsymbol{\gamma}\|}_{\infty} \le \epsilon.
\end{equation*}
\end{definition}
 Let $\mathcal{U}$ denote the set of uniform equilibrium payoffs.

 \medskip

\textbf{Value of Commitment:} We show that the sender's static commitment payoff at the invariant distribution, the Bayesian persuasion payoff, is an upper bound on her equilibrium payoff in the dynamic game. To show this, we adopt a belief-based representation of the dynamic game, under which payoffs can be tracked through the receiver's beliefs.

In the dynamic game, the receiver updates his belief over the states given the sender's message. Since the state follows a Markov chain, beliefs evolve across periods.   Given any history, let $q_n$ and $p_n$ denote the receiver's belief about the state $\omega_n$ before and after observing the message in period $n$, respectively. The transition matrix $Q$ links the two beliefs:
\[
q_{n+1}(\omega)=\sum_{\tilde \omega \in \Omega} p_n(\tilde\omega)\, Q(\omega\mid \tilde \omega).
\]

Since the sender does not observe the receiver’s actions, the current action does not affect the distribution of future play. Hence, a myopic best response to his posterior belief is optimal in each period. Let $\tau^{\mathrm{myp}}$ denote the receiver’s strategy that, at every history, chooses a myopic best response to his posterior belief. In any $\epsilon$-equilibrium, the receiver therefore follows a myopic best response in all but a small fraction of periods.

Let
\[
\hat{u}_S(p) := \mathbb{E}_p[u_S(\omega, a^*(p))]
\]
denote the sender's indirect utility, where $a^*(p)$ is the receiver's best response given belief $p$.\footnote{If multiple actions are optimal, we assume ties are broken in favor of the sender, as is standard in the persuasion literature.}

In any period, the sender's strategy induces a Bayes plausible distribution over posterior beliefs, i.e., the expected posterior equals the prior. In the static persuasion model, \citet{KamenicaBP} show that the sender's maximal payoff, given prior \(\pi\), is the concave envelope of the sender's indirect utility, denoted by \(\mathrm{Cav}\,\hat{u}_S(\pi)\).

The same reasoning applies period by period in the dynamic game.  In period \(n\), the sender induces a Bayes plausible distribution over posterior beliefs with prior \(q_n\), so her expected payoff is at most \(\mathrm{Cav}\,\hat{u}_S(q_n)\). Therefore, against \(\tau^{\mathrm{myp}}\), the sender's
\(\delta\)-discounted payoff under any strategy \(\sigma\) satisfies
\begin{align*}
\gamma_S^\delta(\sigma,\tau^{\mathrm{myp}})
&\le (1-\delta)\sum_{n=1}^\infty \delta^{n-1}\mathbb{E}_{\sigma}\!\left[\mathrm{Cav}\,\hat{u}_S(q_n)\right] \\
&\le (1-\delta)\sum_{n=1}^\infty \delta^{n-1}\mathrm{Cav}\,\hat{u}_S(\mathbb{E}_{\sigma}[q_n]) \\
&= \mathrm{Cav}\,\hat{u}_S(\mu), 
\end{align*}
where the second inequality follows from Jensen's inequality, and the last equality uses \(\mathbb{E}_\sigma[q_n]=\mu\) for all \(n\). Thus, against $\tau^{\mathrm{myp}}$, the sender's payoff is at most the Bayesian persuasion payoff at $\mu$. As we will show in \Cref{prop:necessary}, every uniform equilibrium payoff is induced by an obedient outcome, so the same bound applies to the sender's payoff in any uniform equilibrium.

The stage game corresponds to cheap talk, where the sender cannot commit to how messages are generated.\footnote{Any cheap talk equilibrium payoff with prior $\mu$ is achievable as a uniform equilibrium payoff.} When the sender's preferences are state-independent, \citet{lipnowski2020cheap} show that her maximal cheap talk payoff is given by the quasi-concave envelope of \(\hat{u}_S\). The gap between the concave and quasi-concave envelopes captures the value of commitment (see \Cref{fig:BP-CT}). 

Our objective is to characterize how much of this gap can be recovered through dynamic interaction without feedback. We show that the sender's equilibrium payoff improves upon cheap talk, partially restoring commitment. In particular, when the sender's preferences are state-independent, she attains the Bayesian persuasion payoff.

\section{Main Results}
\label{Sec:maintresults}

 Our main result partially characterizes the set of uniform equilibrium payoffs in terms of a static persuasion model with partial commitment. We first define the static model. We then show that any equilibrium payoff of the static model, as well as any convex combination of such payoffs, can be sustained as a uniform equilibrium payoff. Finally, we show that uniform equilibrium payoffs may lie outside this set and provide a necessary condition.

\subsection{Persuasion with partial commitment}
The key benchmark in our analysis is a static persuasion model with partial commitment. The environment coincides with the stage game of the dynamic model: the sets of states \(\Omega\), messages \(M\), and actions \(A\), as well as payoff functions \(u_S\) and \(u_R\), are the same.

Fix a prior \(\pi \in \Delta(\Omega)\) and a marginal distribution over messages \(\lambda \in \Delta(M)\). The state \(\omega\) is drawn according to \(\pi\). The sender observes \(\omega\) and chooses a message $m$ according to a signaling policy \(\rho:\Omega\to\Delta(M)\) satisfying
\[
\sum_{\omega\in\Omega}\pi(\omega)\rho(m\mid\omega)=\lambda(m)
\quad \text{for all } m\in M.
\]
After observing the message, the receiver chooses an action $a$ via a response rule \(\kappa:M\to\Delta(A)\). An equilibrium requires that the receiver best responds to posterior beliefs and that  the sender cannot profitably deviate within the set of policies that preserve the marginal  distribution \(\lambda\). A babbling equilibrium always exists, in which messages are uninformative and the receiver  best responds to the prior.

The static benchmark imposes a fixed marginal distribution over messages: the sender may deviate only to signaling policies that preserve $\lambda$. In the dynamic game, the same restriction arises endogenously through the receiver's monitoring of message frequencies. 

We formulate equilibrium directly in terms of the posterior beliefs  \(\{p_m\}_{m\in M}\) and the receiver's response rule $\kappa:M \to \Delta(A)$. Let \(\Sigma(\pi,\lambda)\) denote the set of Bayes plausible collections of posterior beliefs  with weights \(\lambda\), that is, \(\{p_m\}_{m\in M}\in \Sigma(\pi,\lambda)\) if $\sum_{m}\lambda(m)p_m=\pi$.

Given \(\{p_m\}_{m\in M}\in \Sigma(\pi,\lambda)\) and a response rule \(\kappa: M \to \Delta(A)\), denote the induced joint distribution over states and actions by \(\nu \in \Delta(\Omega\times A)\), where
\[
\nu(\omega,a)=\sum_{m\in M}\lambda(m)p_m(\omega)\kappa(a\mid m).
\]
The expected payoff of player \(i\in\{S,R\}\) is
\[
\sum_{m\in M}\lambda(m)\sum_{\omega\in\Omega}p_m(\omega)\sum_{a\in A}\kappa(a\mid m)u_i(\omega,a).
\]

\medskip

 The equilibrium conditions are as follows.

\textbf{Receiver.} The receiver best responds to his posterior beliefs: for every 
\(m \in M\) with \(\lambda(m)>0\),
\begin{equation}\label{Eq-R} \tag{Eq-R} 
\mathrm{Supp} \,\kappa(\cdot \mid m)\subseteq A^*(p_m),
\end{equation}
where \(A^*(p)=\arg\max_{b \in A} \sum_{\omega} p(\omega)u_R(\omega,b)\) denotes the receiver's best response correspondence at belief $p$.

\textbf{Sender.} The sender cannot profitably deviate within the set of Bayes plausible collections of posterior beliefs with weights \(\lambda\). That is, for any alternative 
\(\{\tilde p_m\}_{m\in M} \in \Sigma(\pi,\lambda)\), 
\begin{equation}\label{Eq-S} \tag{Eq-S} 
\sum_{m\in M}\lambda(m)\sum_{\omega \in\Omega}
\Bigl(p_m(\omega)-\tilde p_m(\omega)\Bigr) \sum_{a \in A} \kappa(a \mid m)u_S(\omega,a) \geq 0.
\end{equation}


Let \(\mathcal E(\pi,\lambda)\) denote the set of equilibrium payoffs under prior \(\pi\) and marginal \(\lambda\).

To illustrate the persuasion model with partial commitment, consider the following example, adapted from \cite{lipnowski2020cheap}.

\begin{example}\label{example}
A political think tank (sender) advises a lawmaker (receiver) on whether to implement one of two reforms, denoted \(1\) and \(2\), or to implement no reform. The corresponding action set is \(A=\{a_1,a_2,a_0\}\). The think tank's payoff is state-independent: it prefers reform 2 to reform 1, and any reform to no reform. The lawmaker prefers to implement the reform that is good and to keep the status quo if neither reform is sufficiently likely to be good. Let  \(\Omega=\{\omega_1,\omega_2\}\), where \(\omega_i\) is the state in which reform \(i\) is good, and let \(M=\{m_1,m_2\}\) denote the message set. The players' payoffs are given by the following matrix:
  \begin{table}[h!]
  \centering
    \setlength{\extrarowheight}{2pt}
    \begin{tabular}{cc|c|c|c|}
      & \multicolumn{1}{c}{} & \multicolumn{1}{c}{$a_2$}  & \multicolumn{1}{c}{$a_1$} & \multicolumn{1}{c}{$a_0$}\\\cline{3-5}
      \multirow{2}*{}  & $\omega_1$ & $(2,0)$ & $(1,4)$ &  $(0,3)$ \\\cline{3-5}
      & $\omega_2$ & $(2,4)$ & $(1,0)$ & $(0,3)$ \\\cline{3-5}
    \end{tabular}
  \end{table}
  
Since the state space is binary, we identify beliefs with \(p=\PP(\omega_2)\in[0,1]\). The receiver's best response (ties broken in favor of the sender) is
\[
a^*(p)=
\begin{cases}
a_1 & \text{if } p \le \tfrac{1}{4},\\
a_0 & \text{if } \tfrac{1}{4} < p < \tfrac{3}{4},\\
a_2 & \text{if } p \ge \tfrac{3}{4}.
\end{cases}
\]

Under Bayesian persuasion and cheap talk, the sender's maximal payoff is given by the concave and quasi-concave envelopes of the sender's indirect utility, evaluated at the prior (see \Cref{fig:BP-CT}). Their difference captures the value of commitment.

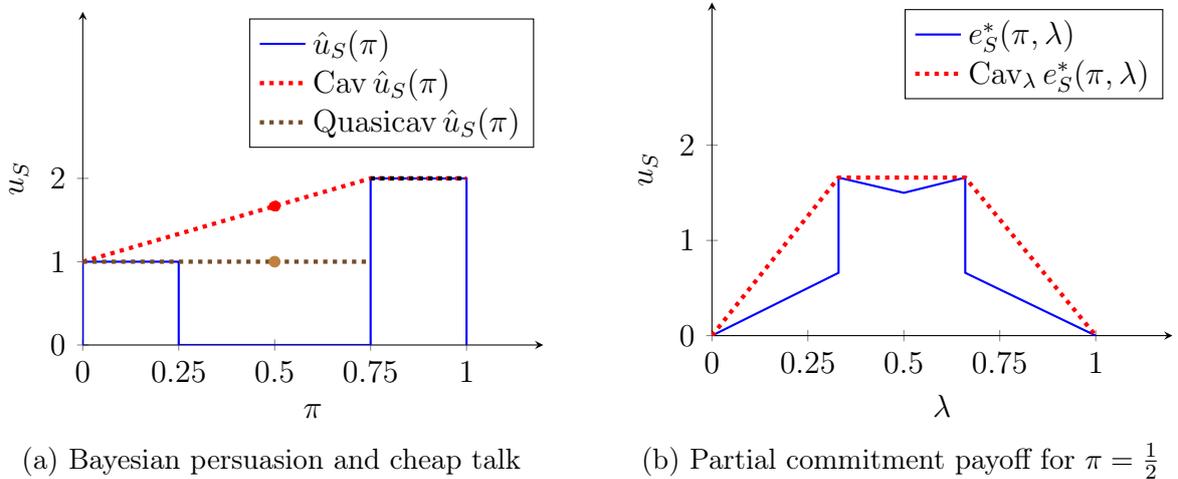
\begin{figure}[htbp]
\centering

\begin{subfigure}{0.48\textwidth}
\centering
\begin{tikzpicture}
\begin{axis}[
width=\linewidth,
height=6cm,
xlabel=$\pi$,
ylabel=$u_S$,
axis x line=bottom,
axis y line=left,
xmin=0, xmax=1.2,
ymin=0, ymax=4,
xtick={0,0.25,0.5,0.75,1},
ytick={0,1,2},
legend cell align=left,
legend style={
    at={(0.98,0.98)},
    anchor=north east,
    fill=white,
    draw=black
}
]

\addplot+[const plot, no marks, thick]
coordinates {(0,0) (0,1) (0.25,1) (0.25,0) (0.75,0) (0.75,2) (1,2) (1,0)};
\addlegendentry{$\hat{u}_S(\pi)$}

\addplot+[dotted, no marks, ultra thick]
coordinates { (0,1) (0,1) (0.75,2) (1,2)};
\addlegendentry{$\mathrm{Cav}\,\hat{u}_S(\pi)$}

\addplot+[const plot, dotted, no marks, ultra thick]
coordinates { (0,1) (0.25,1) (0.25,1) (0.75,1)};
\addlegendentry{$\mathrm{Quasicav}\,\hat{u}_S(\pi)$}

\addplot+[const plot, dotted, no marks, ultra thick]
coordinates {(0.75,2) (1,2)};

\addplot+[
only marks,
mark=*,
mark size=2pt,
mark options={fill=brown, draw=brown}
] coordinates {(0.5,1)};

\addplot+[
only marks,
mark=*,
mark size=2pt,
mark options={fill=red, draw=red}
] coordinates {(0.5,5/3)};

\end{axis}
\end{tikzpicture}
\caption{Bayesian persuasion and cheap talk}
\label{fig:BP-CT}
\end{subfigure}
\hfill
\begin{subfigure}{0.48\textwidth}
\centering
\begin{tikzpicture}
\begin{axis}[
width=\linewidth,
height=6cm,
xlabel=$\lambda$,
ylabel=$u_S$,
axis x line=bottom,
axis y line=left,
xmin=0, xmax=1.2,
ymin=0, ymax=3.5,
xtick={0,0.25,0.5,0.75,1},
ytick={0,1,2},
legend cell align=left,
legend style={
    at={(0.98,0.98)},
    anchor=north east,
    fill=white,
    draw=black
}
]

\addplot+[name path=lower, no marks, thick]
coordinates {(0,0) (0.33,0.66) (0.33,1.66) (0.5,1.5) (0.66,1.66) (0.66,0.66) (1,0)};
\addlegendentry{$e^*_S(\pi,\lambda)$}

\addplot+[name path=upper, ultra thick, dotted, no marks]
coordinates {(0,0) (0.33,1.66) (0.66,1.66) (1,0)};
\addlegendentry{$\mathrm{Cav}_\lambda\,e^*_S(\pi,\lambda)$}


\end{axis}
\end{tikzpicture}
\caption{Partial commitment payoff for $\pi=\frac{1}{2}$}
\label{fig:partial-commitment}
\end{subfigure}

\caption{Comparison of benchmarks and partial commitment}
\label{fig:combined}
\end{figure}

Consider persuasion with partial commitment. The marginal \(\lambda \in \Delta(M)\) restricts the set of feasible collections of posterior beliefs. Since \(M=\{m_1,m_2\}\), we identify \(\lambda=\PP(m_1)\in[0,1]\). For instance, let \(\pi=1/2\) and \(\lambda=1/2\). Then any feasible pair of posteriors must satisfy \(p_{m_1}=1/2 + x\) and \(p_{m_2}=1/2 - x\) for some \(x \in [-1/2,\,1/2]\). The sender's optimal equilibrium is to induce \(p_{m_1}=0\) and \(p_{m_2}=1\), following which the receiver takes actions \(a_1\) and \(a_2\), respectively. To see that this constitutes an equilibrium, note that the receiver best responds to these beliefs, and any sender's deviation that preserves \(\lambda\) does not affect her payoff. The sender's expected payoff is \(3/2\), strictly below the Bayesian persuasion payoff of \( 5/3\).

\Cref{fig:partial-commitment} plots the sender's maximal equilibrium payoff \(e_S^*(\pi,\lambda)\) as a function of \(\lambda\) (blue solid line). In particular, when \(\lambda=1/3\), the sender attains the Bayesian persuasion payoff. Moreover, convex combinations across marginals are attainable in the dynamic game. In particular, the sender can achieve the concave envelope $\mathrm{Cav}_\lambda\, e_S^*(\pi,\lambda)$ (red dotted line),  where
\(\operatorname{Cav}_{\lambda}\) denotes concavification with respect to the
marginal distribution \(\lambda\). This shows that allowing for variation across message distributions expands the set of attainable payoffs and enables the sender to achieve strictly higher payoffs than under some fixed marginals.
\end{example}

This model lies between Bayesian persuasion and cheap talk. In Bayesian persuasion, the sender faces no restriction on the marginal distribution over messages and cannot deviate after choosing a signaling policy, whereas in cheap talk she can deviate freely. Here, deviations are restricted to policies that preserve the marginal distribution over messages. This restriction is closely related to the notion of credibility studied by \citet{lin2024credible}, who characterize sender incentive compatibility through cyclic monotonicity conditions. The difference is that we fix the marginal distribution over messages and show how the restriction arises endogenously in the dynamic game from monitoring message frequencies.

Let \(e^*(\pi,\lambda)  \in \mathcal{E}(\pi,\lambda)\) denote a sender-optimal equilibrium payoff vector, and define 
\[
e_S^*(\pi):=\max_{\lambda \in \Delta(M)} e_S^*(\pi,\lambda).
\]

\begin{remark}\label{remark:cheap-talk}
 Any cheap talk equilibrium is also an equilibrium of the persuasion model with partial commitment under the induced marginal distribution over messages.
\end{remark}

The receiver's incentive constraint is unchanged, while cheap talk allows a larger set of deviations than under partial commitment.

Let \(u_S^{\mathrm{CT}}(\pi)\) and \(u_S^{\mathrm{BP}}(\pi)=\mathrm{Cav}\,\hat{u}_S(\pi)\) denote the sender's maximal equilibrium payoff in cheap talk and Bayesian persuasion, respectively. Then
\[
u_S^{\mathrm{CT}}(\pi) \le e_S^*(\pi) \le u_S^{\mathrm{BP}}(\pi).
\]

\begin{remark}\label{remark:benchmark}
If the sender's payoff is state-independent, then $e_S^*(\pi)= u_S^{\mathrm{BP}}(\pi)$.
\end{remark}

Indeed, when the sender's payoff is state-independent, deviations that preserve the distribution over messages do not change the distribution over actions and therefore do not affect her payoff. Hence, the sender attains the Bayesian persuasion payoff.

\subsection{Uniform equilibrium payoffs}
We now state the main result. It establishes that any equilibrium payoff of the  persuasion model with partial commitment, evaluated at the invariant distribution $\mu$, as well as any convex combination thereof, can be attained as a uniform equilibrium payoff.  Let $\mathcal{E}(\pi):=\bigcup_{\lambda \in \Delta (M)} \mathcal{E}(\pi,\lambda)$ denote the set of equilibrium payoffs under prior \(\pi\) for some marginal distribution \(\lambda\).  

\begin{theorem}\label{th:main}
Any equilibrium payoff of the persuasion model with partial commitment at prior $\mu$, and any convex combination of such payoffs across message distributions, is a uniform equilibrium payoff. Equivalently,
\begin{equation*}
\mathrm{Co} \, \mathcal{E}(\mu) \subseteq \mathcal{U}. 
\end{equation*}
\end{theorem}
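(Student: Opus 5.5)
The plan is to build, for a fixed target and a fixed $\epsilon$, an explicit block strategy profile of the type described in the introduction, and then verify the three uniform-equilibrium inequalities. I will treat a single point of $\mathcal{E}(\mu,\lambda)$ first and obtain convex combinations at the end.

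\emph{The target and its implementation.} Fix $\lambda$, posteriors $\{p_m\}\in\Sigma(\mu,\lambda)$ and a response rule $\kappa$ satisfying \eqref{Eq-R} and \eqref{Eq-S}.

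The first step is to implement the same posteriors in every period. Suppose the receiver's prior at the start of period $n$ is $\mu$ and the sender uses the policy $\rho(m\mid\omega)=\lambda(m)p_m(\omega)/\mu(\omega)$, where $\mu(\omega)>0$ by irreducibility. Then the posterior after $m$ is $p_m$. The next prior is $p_mQ$, which is in general not $\mu$.

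Here I use pseudo-renewal. We have $Q(\omega\mid\tilde\omega)=\beta_\omega+(1-\beta_\omega-\dots)$ adjustments. More usefully, $Q$ can be written as a mixture: with some probability the chain ``renews'' with a fresh draw from a fixed distribution, and otherwise it stays put. Because of this, the sender can condition on the previous state and message to choose a policy that again induces exactly $\{p_m\}$ with weights $\lambda$ from prior $p_{m_{n-1}}Q$. Concretely, I will use the recursive construction of \citet{rsv}. The sender's equilibrium strategy $\sigma^*$ then makes messages i.i.d.\ with law $\lambda$, with posterior $p_{m_n}$ in each period. If this fails in general, I will fall back on the closure condition mentioned in \Cref{sec:pseudo-renewal}.

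\emph{Quotas.} Fix a block length $L$ and quotas $\ell(m)\approx\lambda(m)L$ summing to $L$. Within a block, the receiver plays $\kappa(\cdot\mid m_n)$ whenever message $m_n$ has not exceeded its quota. Otherwise he treats the message as some unfilled message $m'$ and plays $\kappa(\cdot\mid m')$.

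\emph{Verification.}
\begin{enumerate}
\item \textbf{Target payoff.} Under $\sigma^*$, message counts in a block concentrate by the law of large numbers. Relabeling therefore happens only in an $O(\sqrt{L})/L$ fraction of periods. Since payoffs are bounded, block-average payoffs are within $\epsilon$ of the target for large $L$, and uniformly in $\delta$ close to $1$ (a standard block-averaging argument).
\item \textbf{Receiver.} The receiver does not affect anything the sender sees. So his best reply is myopic with respect to $p_{m_n}$, and by \eqref{Eq-R} $\tau^*$ is myopically optimal except on relabeled periods. The relabeled periods cost at most an $\epsilon$ fraction.
\item \textbf{Sender.} Consider any deviation $\tilde\sigma$ against $\tau^*$. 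Within a block, the action in each period is $\kappa(\cdot\mid\hat m_n)$, where $\hat m_n$ is the relabeled message, and the relabeled counts equal the quotas exactly.
\end{enumerate}

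The key step, and the one I expect to be hardest, is to bound the sender's block payoff under $\tilde\sigma$. The empirical joint distribution of $(\omega_n,\hat m_n)$ over a block has message marginal exactly $\ell/L\approx\lambda$. Its state marginal is approximately $\mu$ in expectation over the block, by ergodicity, and this holds whatever the sender does, since she does not control states. Therefore the expected block-average sender payoff is $\sum_m \lambda(m)\sum_\omega \tilde p_m(\omega)\sum_a\kappa(a\mid m)u_S(\omega,a)$ for some $\{\tilde p_m\}$ with $\sum_m\lambda(m)\tilde p_m\approx\mu$, up to $O(\epsilon)$. Projecting onto $\Sigma(\mu,\lambda)$ costs only a continuity error. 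Then \eqref{Eq-S} gives a gain of at most $\epsilon$.

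The delicate points are two. First, the state-marginal approximation must be uniform over sender strategies; this follows from mixing of the chain, since states are exogenous. Second, the linear structure must be exploited by averaging over periods and blocks, not period by period, since conditional priors vary within a block. I will handle both by taking expectations of empirical frequencies and using the fact that the payoff is linear in the joint distribution.

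\emph{Convex combinations.} A point $\sum_k\alpha_k e_k$ with $e_k\in\mathcal{E}(\mu,\lambda_k)$ is obtained by splitting each block into sub-blocks of relative lengths $\approx\alpha_k$. In sub-block $k$ the players use the construction for $e_k$, with its own quotas. The deviation argument then applies sub-block by sub-block, since each sub-block's quotas fix its message marginal.
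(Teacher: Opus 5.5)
\textbf{There is a genuine gap, in the implementation step.}

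Most of your plan matches the paper's proof: implement $\{p_m\}$ recursively, enforce quotas within blocks, and note that the receiver is myopic and departs from a myopic best reply only on relabeled periods. You also bound any sender deviation through the quota-induced $\{\tilde p_m\}$, which average to $\mu$ because each $\omega_n$ has law $\mu$ (exactly, since $\omega_1\sim\mu$), so that \eqref{Eq-S} applies. The implementation step, however, is the one genuinely new construction in the proof, and what you assert there cannot hold.

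You cannot induce $\{p_m\}$ with weights $\lambda$ from the prior $p_{m_{n-1}}Q$. Bayes plausibility at a message history $h$ forces $\sum_m\mathbb{P}(m_n=m\mid h)\,p_m=p_{m_{n-1}}Q$. But $\sum_m\lambda(m)p_m=\mu$, and under pseudo-renewal $p_{m'}Q=\alpha p_{m'}+(1-\alpha)\mu$. Hence i.i.d.\ messages with law $\lambda$ and full-history posterior $p_{m_n}$ are possible only if $\alpha=0$ or the equilibrium is babbling. If instead the full-history posterior is not $p_{m_n}$, your receiver step fails, since it needs $\kappa(\cdot\mid m_n)$ to be a best reply to his actual belief.

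The missing idea is that the weights must depend on the previous message:
\[
p_{m'}Q=\sum_{m}w(m\mid m')\,p_m,\qquad w(m\mid m')=\alpha\mathbf{1}_{\{m=m'\}}+(1-\alpha)\lambda(m).
\]
This is implemented by $\sigma^*(m\mid\omega,m')=p_m(\omega)\,w(m\mid m')/q_{m'}(\omega)$. It is what your renewal picture gives if carried out: keep $m_{n-1}$ if the chain stayed put, and redraw from the static policy if it renewed. The sender simulates the renewal event; the paper's formula is this rule averaged given $(\omega_n,m_{n-1})$.

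Messages are then a sticky Markov chain with stationary law $\lambda$, not i.i.d. The per-period law of $(\omega_n,m_n)$ is still $\lambda(m)p_m(\omega)$. Your concentration step goes through if you replace the i.i.d.\ law of large numbers with an ergodic theorem for this chain, which is uniformly ergodic because $\alpha<1$.

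Deferring to an unspecified construction in \citet{rsv}, with the closure condition of \Cref{sec:pseudo-renewal} as a fallback, does not close the gap. That condition asks for a stochastic $W$ with $p_mQ=\sum_r W_{mr}p_r$ and $\lambda W=\lambda$, which again means Markov messages. The content of the step is verifying that pseudo-renewal supplies $W=w$.

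\textbf{The convex-combination step has a smaller omission of the same kind.} The construction for $e_k$ needs a starting prior in $\mathrm{Co}\{p^k_m\}$, such as $\mu$. At the start of sub-block $k$, however, the receiver's prior is $\alpha p^{k-1}_{m'}+(1-\alpha)\mu$, which is generally outside that set. So $\{p^k_m\}$ cannot be induced there, and $\kappa_k$ need not be a best reply.

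The paper handles this by inserting babbling transition phases of length $\tilde N$, after which the prior is within $O(\alpha^{\tilde N})$ of $\mu$, and by making these phases negligible relative to the sub-blocks. Your sender-incentive bound does not need this fix, since within each sub-block the quota-induced $\tilde p^k_m$ average to $\mu$ whatever the starting prior. Only the on-path payoff and receiver obedience depend on it.

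Your handling of irrational $\lambda$ is fine: integer quotas $\ell(m)\approx\lambda(m)L$ plus a projection onto $\Sigma(\mu,\lambda)$, which is valid because $\mu$ has full support. It is more explicit than the paper's appeal to rational approximation.
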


\textbf{Proof Sketch:} We first show that any  equilibrium payoff $e \in \mathcal{E}(\mu)$ of  the persuasion model with partial commitment is a uniform equilibrium payoff.   We then extend the argument to convex combinations.

Fix \(\lambda \in \Delta(M)\) and an equilibrium payoff \(e \in \mathcal{E}(\mu,\lambda)\). 
Let \(\{p_m\}_{m \in M} \in \Sigma(\mu,\lambda)\) and 
\(\kappa: M \to \Delta(A)\) denote posterior beliefs and a response rule that induce \(e\).

We construct a strategy profile $(\sigma^*,\tau^*)$. Play is divided into consecutive blocks of length $N$.\footnote{Block strategies are commonly used in the analysis of uniform equilibria under imperfect monitoring to detect deviations via statistical tests; see, e.g., \cite{Lehrer1990,lehrer1992two,fudenberg1991approximate,tomala1999nash,deb2016uniform}.} At the start of each block, only the receiver's quota counters are reset; prescribed play otherwise continues according to the strategies described below. For any rational $\lambda$, choose $N$ such that $N \lambda(m)$ is an integer for every $m \in M$.\footnote{For irrational $\lambda$, the result follows by approximation with rational distributions.}

\medskip

\textbf{Sender's strategy}: The sender's strategy is constructed to induce the same posterior beliefs ${\{p_m\}}_{m\in M}$ in every period.  This ensures that, at every period, the  joint distribution over states and messages matches the corresponding static outcome.

For each \(m\in M\), let \(q_m=p_mQ\) denote the receiver's prior belief in the next period
following message \(m\).  Under the pseudo-renewal and the irreducibility assumption, there exists $\alpha \in [0,1)$ such that for every $m$,\footnote{Indeed, pseudo-renewal implies that \(pQ=\alpha p+\beta\), where
\(\alpha=1-\sum_{\omega\in\Omega}\beta_\omega\) and
\(\beta=(\beta_\omega)_{\omega\in\Omega}\). Since \(\mu\) is invariant,
\(\beta=(1-\alpha)\mu\).}
\[
q_m = \alpha p_m + (1-\alpha)\mu= \alpha p_m + (1-\alpha)\sum_{m'\in M} \lambda(m')p_{m'}.
\]
Thus, after message \(m\), the next-period prior \(q_m\) can be written as a convex combination of the current posterior \(p_m\) and the invariant prior \(\mu\). The sender uses this decomposition to generate the target posteriors.\footnote{The construction also applies to arbitrary Markov chains under a closure condition on the induced posteriors; see \Cref{sec:pseudo-renewal}.} To do this, the sender uses a simple recursive strategy: in each period, the message depends only on the current state and the previous message.\footnote{For $n=1$, the sender uses the equilibrium strategy of the static model.}
\[
\sigma_n^*(m_n=m \mid \omega_n=\omega, m_{n-1}=m')
=
\frac{p_{m}(\omega)}{q_{m'}(\omega)}
\Big((1-\alpha) \lambda(m) + \alpha \mathbf{1}_{\{m = m'\}}\Big).
\]

By Bayes' rule, this strategy induces posterior belief \(p_m\) after each message \(m\). The induced evolution of beliefs is illustrated in \Cref{fig:senderstrategy}.

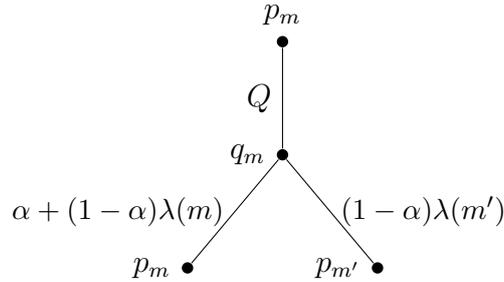
\begin{figure}[htp!]
  \centering
  \begin{tikzpicture}[thin,
    level 1/.style={sibling distance=40mm},
    level 2/.style={sibling distance=25mm},
    level 3/.style={sibling distance=15mm},
    every circle node/.style={minimum size=1.5mm,inner sep=0mm}]
    
    \node[circle,fill,label=above:$p_m$] (root) { }
      child { node [circle,fill,label=left:$q_m$] (A) { }
        child {node [circle,fill,label=left:$p_m$] (B) { }
          edge from parent
          node[left] {\small $\alpha + (1-\alpha)\lambda(m)$}}
        child {node [circle,fill,label=left:$p_{m'}$] (D) { }
          edge from parent
          node[right] {\small $(1-\alpha)\lambda(m')$}}
        edge from parent
        node[left] {$Q$}};
  \end{tikzpicture}
  \caption{Recursive decomposition of receiver's beliefs under $\sigma^*$}
  \label{fig:senderstrategy}
\end{figure}

\textbf{Receiver's strategy}: In each period \(n\), the sender sends a message \(m_n\). If \(m_n\) has not yet reached its quota, the receiver plays according to \(\kappa(\cdot \mid m_n)\), a best response given his posterior belief. Otherwise, he ignores \(m_n\) and instead plays according to \(\kappa(\cdot \mid m')\) for some different message \(m'\) whose quota has not yet been exhausted. Thus, the message sent by the sender need not coincide with the message used by the receiver. Note that $\tau^*$ need not be sequentially rational at every history: when a quota binds, the receiver may have to take an action that is not his myopic best response.\footnote{The construction is not a Perfect Bayesian equilibrium. We discuss this distinction and its implications in \Cref{sec:uniform-pbe}.}

To track message frequencies within a block, let $r_n \in M$ denote the message used by the receiver in period $n$, and define
\[
C_n(m)
:=
\sum_{k=1}^n \mathbf{1}_{\{r_k = m\}}
\]
as the number of times message \(m\) has been used up to period \(n\). Under \(\tau^*\), the receiver plays according to \(\kappa(\cdot \mid r_n)\). If \(C_{n-1}(m_n) < N\lambda(m_n)\), then the receiver follows the sender's message, so \(r_n = m_n\). Otherwise, he chooses a different message \(r_n = m' \neq m_n\) such that \(C_{n-1}(m') < N\lambda(m')\). For example, \(m'\) can be chosen randomly with probability proportional to the remaining quota \(N\lambda(m') - C_{n-1}(m')\). Hence, in each block, each message \(m\) is used exactly \(N\lambda(m)\) times.

The proof proceeds in two steps. First, we show that the payoff induced by $(\sigma^*,\tau^*)$ can be made arbitrarily close to $e$ for sufficiently patient players. Second, we show that, for any $\epsilon>0$, the profile $(\sigma^*,\tau^*)$ is an $\epsilon$-equilibrium  for sufficiently patient players.\footnote{The strategy profile is also a uniform equilibrium with full feedback. This relates to Proposition 2 in \cite{rsv}, which shows that Nash equilibrium payoffs without feedback are a subset of those with full feedback.}

Under $\sigma^*$, the expected joint distribution over states and messages in any period $n$ coincides with the static distribution, i.e.,
\[
\mathbb{E}_{\sigma^*}\big[\mathbf{1}_{\{\omega_n=\omega,m_n=m\}}\big] = \lambda(m)p_m(\omega).
\]

Hence, for a large block $N$, the receiver acts according to the sender's message in most periods with high probability. Therefore, by choosing $N$ large enough and considering players  sufficiently patient, the induced payoff can be made arbitrarily close to $e$.

Now, we show that for any $\epsilon>0$, the strategy profile $(\sigma^*,\tau^*)$ is an $\epsilon$-equilibrium for all sufficiently patient players.

First, consider the receiver’s deviations. Since the sender does not observe the receiver’s actions, his current choice does not affect future play, so a myopic best response is optimal at every history. Under \(\tau^*\), whenever the quota is not exhausted, the receiver responds to \(m_n\) according to \(\kappa(\cdot \mid m_n)\), which is optimal given his posterior. For sufficiently large \(N\), this holds in all but a small fraction of periods. Hence, \(\tau^*\) coincides with a myopic best response in almost all periods, so any gain from deviation vanishes for sufficiently patient players.

Next, consider the sender's deviations. The receiver's strategy enforces that each message $m$ is used exactly $N\lambda(m)$ times within a block. Hence, any  deviation $\tilde{\sigma}$ induces a collection of beliefs $\{ \tilde p_m \}_{m \in M}$, where for each $m$ with $\lambda(m)>0$,
\[
\tilde p_m(\omega)
:=
\frac{1}{N\lambda(m)} \EE_{\tilde{\sigma},\tau^*} \Bigg[\sum_{n=1}^N \mathbf{1}_{\{\omega_n=\omega,\, r_n=m\}}\Bigg].
\]
Here, $\tilde p_m$ is the empirical distribution of states in periods in which the receiver uses message $m$.

By construction, $\{\tilde p_m\}_{m \in M} \in \Sigma(\mu,\lambda)$ and therefore defines a feasible deviation in the static model. By the sender's equilibrium condition, no such deviation yields a higher payoff.  Moreover, for sufficiently large $N$ and sufficiently patient players, the sender's discounted payoff under $\tilde \sigma$ is arbitrarily close to the payoff of the corresponding static deviation.  Hence, $e\in \mathcal{U}$.

We now extend the previous block construction to obtain convex combinations of equilibrium payoffs.\footnote{Since payoffs lie in \(\mathbb R^2\), Carathéodory's theorem implies that finite convex combinations are without loss.} Let $\lambda=\sum_{i=1}^n \eta_i \lambda_i$ with $\eta_i \ge 0$ and $\sum_{i=1}^n \eta_i=1$ and let $e_i \in \mathcal{E}(\mu,\lambda_i)$. The players still use a block strategy. Choose \(N\) such that \(N \eta_i \lambda_i(m)\) is an integer for each \(i\) and \(m \in M\). Each block of length $N$ is then divided into sub-blocks of lengths $\{N \eta_i\}_{i=1}^n$, and in sub-block $i$ the players play according to the strategy profile associated with $\lambda_i$. Between sub-blocks, we insert short transition phases with negligible discounted weight, so that beliefs can return close to the relevant starting beliefs.

Using the same argument as before, each sub-block yields an average payoff arbitrarily close to $e_i$ for sufficiently large $N$ and sufficiently patient players. Since sub-block $i$ occupies a fraction $\eta_i$ of each block, the overall payoff is arbitrarily close to $\sum_{i=1}^n \eta_i e_i$. Moreover, because the quota counters are reset at the beginning of each sub-block, deviations can be evaluated separately within each sub-block, and the same incentive argument applies.

\subsection{Failure of the Converse}
Theorem \ref{th:main} shows that all convex combinations of equilibrium payoffs of the persuasion model with partial commitment are uniform equilibrium payoffs. We show that the converse inclusion does not hold.

In the static model with partial commitment, the sender can deviate to any policy that preserves the marginal distribution over messages. In the dynamic game, however, not all such deviations are undetectable, as the receiver can condition on richer statistical tests, such as frequencies of message pairs. The following example shows that some uniform equilibrium payoffs lie outside the convex hull of equilibrium payoffs in the persuasion model with partial commitment.

\begin{example}\label{ex:counter-example}
Consider binary states $\Omega=\{\omega_1,\omega_2\}$, binary messages
$M=\{\ell,h\}$, and actions $A=\{a_1,a_2,a_0,a_3,a_4\}.$
Since the state space is binary, we identify a belief with
$p=\PP(\omega_2)\in[0,1]$. Consider the transition matrix
\[
Q=
\begin{bmatrix}
\frac23 & \frac13\\[1mm]
\frac13 & \frac23
\end{bmatrix},
\]
which has invariant distribution $\mu=\frac12$.

The sender's and receiver's payoffs are given by
\begin{table}[htp!]
\centering
\setlength{\extrarowheight}{2pt}
\begin{tabular}{cc|c|c|c|c|c|}
 & \multicolumn{1}{c}{}
 & \multicolumn{1}{c}{$a_1$}
 & \multicolumn{1}{c}{$a_2$}
 & \multicolumn{1}{c}{$a_0$}
 & \multicolumn{1}{c}{$a_3$}
 & \multicolumn{1}{c}{$a_4$}\\\cline{3-7}
\multirow{2}*{}
& $\omega_1$
& $(1,4)$
& $(4,2)$
& $(0,\frac12)$
& $(3,-2)$
& $(0,-8)$\\\cline{3-7}
& $\omega_2$
& $(2,-8)$
& $(-1,-2)$
& $(0,\frac12)$
& $(3,2)$
& $(0,4)$\\\cline{3-7}
\end{tabular}
\end{table}

The receiver's best response is $a_1$ for
$p\in[0,\frac14]$, $a_2$ for
$p\in[\frac14,\frac38]$, $a_0$ for
$p\in[\frac38,\frac58]$, $a_3$ for
$p\in[\frac58,\frac34]$, and $a_4$ for
$p\in[\frac34,1]$.

In every odd period, the sender induces belief
$p^{\mathrm{o}}_{\ell}=0$ and $p^{\mathrm{o}}_{h}=1$.
In the following even period, she repeats the previous message,
inducing beliefs $p^{\mathrm{e}}_{\ell}=\frac13$ and $p^{\mathrm{e}}_{h}=\frac23$,
where $p^{\mathrm{e}}_m=p^{\mathrm{o}}_mQ$ for $m\in\{\ell,h\}$.

The receiver best responds to these beliefs. In odd periods, he takes
$a_1$ after $\ell$ and $a_4$ after $h$. In even periods, he takes
$a_2$ after $\ell$ and $a_3$ after $h$. He follows this strategy as
long as the message in every even period coincides with the message
sent in the preceding odd period, so that only the pairs $(\ell,\ell)$ and $(h,h)$
occur. If this condition is violated, he ignores all subsequent
messages and plays $a_0$ forever.

We first verify that the sender does not want to switch the odd-period
messages while preserving the continuation pattern. Under the prescribed
strategy, her expected payoff is $1/2$ in odd periods and $8/3$ in even
periods. Thus, her undiscounted long-run payoff equals $\frac12\left(\frac12+\frac83\right)=\frac{19}{12}$.

If she swaps the odd-period messages and then repeats the chosen message in
the following even period, her expected payoff is $1$ in odd periods. In
even periods, the receiver takes $a_3$ at belief $1/3$ and $a_2$ at belief
$2/3$, giving the sender expected payoff $\frac12\left(3+\frac23\right)=\frac{11}{6}$. 
Her long-run payoff from this deviation is therefore $\frac12\left(1+\frac{11}{6}\right)
=
\frac{17}{12}
<
\frac{19}{12}.$ The same comparison rules out all other deviations that preserve the odd--even pairing.\footnote{It is enough to consider pure deviations. Besides switching the two odd-period messages, the remaining pure deviations pool the two states: the sender sends the same message regardless of the odd-period state and repeats it in the following even period. This results in a long-run payoff of $3/2$.} Finally, any deviation that breaks the odd--even pairing triggers $a_0$ forever and is therefore unprofitable.  Since these inequalities are strict, they continue to hold for all sufficiently patient players.

On the other hand, the receiver has no profitable deviation because he always takes a myopic best response on the equilibrium path. The prescribed strategy profile is therefore a uniform equilibrium with the corresponding limit payoff 
\[
e=
\left(\frac{19}{12},\frac73\right).
\]

We now show that this payoff cannot be obtained as a convex
combination of static partial-commitment equilibrium payoffs. First, no static equilibrium can induce a posterior strictly below $1/4$. Suppose, to the contrary, that some message $m$ induces $p_m<1/4$. The receiver then uniquely chooses $a_1$ after $m$. Because the average posterior equals $1/2$, the other message $m'$ must induce a posterior $p_{m'}>1/2$. After $m'$, the receiver chooses among $a_0,a_3$, and $a_4$.

For sufficiently small $\varepsilon>0$, consider the alternative posterior profile $\widetilde p_m=p_m+\varepsilon$ and $\widetilde p_{m'}=p_{m'}-\frac{\lambda(m)}{\lambda(m')}\varepsilon$. This perturbation preserves Bayes plausibility and the message probabilities. Under $a_1$, the sender's payoff difference between states is
\[
u_S(\omega_2,a_1)-u_S(\omega_1,a_1)=1,
\]
whereas under each action $a\in\{a_0,a_3,a_4\}$,
\[
u_S(\omega_2,a)-u_S(\omega_1,a)=0.
\]
Hence, the perturbation strictly increases the sender's payoff, contradicting her equilibrium condition. Therefore, every posterior induced with positive probability in a static equilibrium satisfies $p_m\geq\frac14$.

For every $p\in[\frac14,1]$, the receiver's optimal payoff satisfies
\[
\max_{a\in A}\sum_{\omega\in\Omega}p(\omega)u_R(\omega,a)
=
\max\left\{
4-12p,\,
2-4p,\,
\frac12,\,
-2+4p,\,
-8+12p
\right\}
\leq 4p.
\]
Therefore, the receiver's payoff in any static partial-commitment equilibrium is at most
\[
e_R
\leq
4\sum_m\lambda(m)p_m
=
4\mu
=
2.
\]
The same bound holds for every convex combination of static
equilibrium payoffs. Since the dynamic equilibrium gives the receiver
payoff $\frac73>2,$ we conclude that
\[
e
\notin
\operatorname{Co} 
\mathcal E(\mu).
\]
\end{example}

This example shows that the set of undetectable deviations in the dynamic game may be strictly smaller than the set of deviations that preserve the marginal distribution. We therefore focus on a class of deviations that are always feasible and undetectable, yielding a necessary condition.

Copula deviations form such a class. They correspond to applying the equilibrium strategy to a fictitious state process whose joint distribution with the true state process is described by a copula with marginal \(\mu\). As shown by \citet{rsv}, in pseudo-renewal Markov chains such fictitious processes can be constructed so as to have the same law as the true state process, making the resulting deviations undetectable.

Let
\[
\mathcal C(\pi)
=
\left\{
c \in \Delta(\Omega \times \Omega):
\sum_{\omega' \in \Omega} c(\omega,\omega')=\pi(\omega)
\text{ and }
\sum_{\omega \in \Omega} c(\omega,\omega')=\pi(\omega')
\right\}
\]
denote the set of copulas with marginals \(\pi\) and let $c(\omega\mid\omega')=\frac{c(\omega,\omega')}{\pi(\omega')}$ whenever $\pi(\omega')>0$.

Given an outcome \(\nu\in \Delta(\Omega \times A)\) with prior \(\pi \in \Delta(\Omega)\), define the set of copula deviations 
\[
\mathcal M(\nu)
=
\left\{
\tilde\nu \in \Delta(\Omega \times A):
\tilde\nu(\omega,a)
=
\sum_{\omega' \in \Omega} c(\omega \mid \omega')\,\nu( \omega',a)
\text{ for some } c \in \mathcal{C}(\pi)
\right\}.
\]

\begin{definition}
An outcome $\nu \in \Delta(\Omega \times A)$   is \textbf{robust to copula deviations} if
\[
\EE_{\nu}[u_S]\geq \EE_{\tilde\nu}[u_S] \quad \text{for every } \tilde\nu \in \mathcal M(\nu).
\]
\end{definition}

Finally, we impose the receiver’s incentive constraint. An outcome $\nu \in \Delta(\Omega \times A)$ is \textbf{obedient} if, for every $a,a' \in A$,
\[
\sum_{\omega \in \Omega} \nu(\omega,a)\Bigl(u_R(\omega,a)-u_R(\omega,a')\Bigr)\ge 0.
\]
Equivalently, each recommended action is a myopic best response to the posterior it induces.

The next result shows that any uniform equilibrium payoff must be generated by an outcome that is both obedient and robust to copula deviations.

\begin{proposition}\label{prop:necessary}
For every uniform equilibrium payoff \(e\), there exists an outcome
\(\nu \in \Delta(\Omega \times A)\) with prior \(\mu\) such that
\(e = \mathbb{E}_\nu[(u_S,u_R)]\) and \(\nu\) is obedient and robust to copula deviations.
\end{proposition}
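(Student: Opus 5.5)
Fix a uniform equilibrium payoff $e$. For each $\epsilon=1/k$ we take a profile $(\sigma_k,\tau_k)$ and a discount factor $\delta_k\ge\delta_0(1/k)$, with $\delta_k\to1$. We define the discounted expected empirical outcome
\[
\nu_k(\omega,a):=\mathbb{E}_{\sigma_k,\tau_k}\Big[(1-\delta_k)\sum_{n\ge1}\delta_k^{n-1}\mathbf 1_{\{\omega_n=\omega,a_n=a\}}\Big].
\]
Then $\mathbb{E}_{\nu_k}[(u_S,u_R)]=\boldsymbol\gamma^{\delta_k}(\sigma_k,\tau_k)$, which lies within $1/k$ of $e$. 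The $\Omega$-marginal of $\nu_k$ is exactly $\mu$, because the state process is stationary and does not depend on play. By compactness we pass to a subsequence with $\nu_k\to\nu$. The marginal of $\nu$ is $\mu$ and $e=\mathbb{E}_\nu[(u_S,u_R)]$. It remains to show that $\nu$ is obedient and robust to copula deviations. Both properties are closed conditions: obedience is a family of linear inequalities in $\nu$, and $\mathcal M(\nu)$ depends continuously on $\nu$ through the compact set $\mathcal C(\mu)$. So it suffices to prove approximate versions for $\nu_k$ with errors that vanish as $k\to\infty$.

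\emph{Obedience.} For $a,a'$, consider the receiver deviation $\tilde\tau$ that plays $a'$ whenever $\tau_k$ recommends $a$ (a relabeling applied after the randomization of $\tau_k$). The sender does not observe actions, so this deviation leaves the joint law of states and messages unchanged. The receiver's gain is therefore exactly $\sum_\omega\nu_k(\omega,a)(u_R(\omega,a')-u_R(\omega,a))$, and the $\epsilon$-equilibrium condition bounds it by $1/k$. Letting $k\to\infty$ gives obedience of $\nu$.

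\emph{Copula robustness.} This is the main step, and I would follow \citet{rsv}. Fix $c\in\mathcal C(\mu)$. The aim is to build a fictitious process $(\omega'_n)$ that is jointly generated with the true process $(\omega_n)$ and satisfies three properties: (i) $(\omega'_n)$ has the same law as the Markov chain; (ii) the sender, who observes $\omega_n$ and past states, can generate $\omega'_n$ from her own information using private randomization; and (iii) the stationary frequency of pairs $(\omega_n,\omega'_n)$ is close to $c$. The deviation $\tilde\sigma$ is then to play $\sigma_k$ as if the state history were $(\omega'_n)$.

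Property (i) implies that the joint law of $(\omega',m,a)$ under $(\tilde\sigma,\tau_k)$ equals the law of $(\omega,m,a)$ under $(\sigma_k,\tau_k)$. The receiver's behavior is therefore unchanged, and the true states are drawn via the coupling. The resulting discounted outcome is then approximately $\sum_{\omega'}c(\omega\mid\omega')\nu_k(\omega',a)$. This is exact if, conditional on the entire fictitious path and on play, the distribution of $\omega_n$ depends only on $\omega'_n$ through $c(\cdot\mid\omega'_n)$.

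Pseudo-renewal is what makes such a coupling possible. Write $Q=\alpha I+(1-\alpha)\mathbf 1\mu$. Each transition is then either a ``stay'' with probability $\alpha$ or a ``fresh draw from $\mu$'' with probability $1-\alpha$. Couple the two chains so that they share the same renewal times, and at each renewal draw the pair $(\omega_n,\omega'_n)\sim c$. The sender cannot observe renewals directly: a fresh draw may coincide with the current state. She can, however, infer them with private randomization consistent with the posterior. The delicate point is that the sender must sample $\omega'_n$ given the true $\omega_n$ and the past, and doing so requires the coupling to be adapted in the correct direction. I would first attempt to construct the coupling in reverse: draw $\omega'$ given $\omega$ using $c(\omega'\mid\omega)$ at inferred renewal epochs, and then check that both the Markov law of $\omega'$ and the conditional-independence property hold. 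If exactness fails, I would fall back on an approximate coupling whose error vanishes uniformly in $\delta$.

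Once the coupling is in place, the $\epsilon$-equilibrium condition gives $\mathbb{E}_{\tilde\nu_k}[u_S]\le\mathbb{E}_{\nu_k}[u_S]+1/k$, where $\tilde\nu_k=\sum_{\omega'}c(\cdot\mid\omega')\nu_k(\omega',\cdot)$. Passing to the limit along the subsequence and using continuity in $\nu$ yields robustness of $\nu$ to copula deviations, which completes the proof.
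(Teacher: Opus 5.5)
Your proposal is correct and follows essentially the paper's proof. You take limits of the discounted outcomes $\nu_k$, obtain obedience from receiver deviations, and obtain copula robustness by having the sender play $\sigma_k$ on a fictitious chain coupled with the true one, as in Lemma 4 of \citet{rsv}; your relabeling deviation $a\mapsto a'$ yields the obedience inequalities a bit more directly than the paper's comparison with $\tau^{\mathrm{myp}}$.

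Your hedged ``reverse'' coupling in fact works exactly, so no approximate fallback is needed. Nature draws $\omega_{n+1}$ from $Q(\cdot\mid\omega_n)$ independently of past fictitious states, so the sender can simulate the shared-renewal joint law as follows. She declares a renewal with certainty when $\omega_{n+1}\neq\omega_n$, and with probability $(1-\alpha)\mu(\omega_n)/(\alpha+(1-\alpha)\mu(\omega_n))$ otherwise. At a renewal she draws $\omega'_{n+1}$ with probability $c(\omega_{n+1},\omega')/\mu(\omega_{n+1})$; otherwise she keeps $\omega'_{n+1}=\omega'_n$. This reproduces the coupling exactly, with $\omega_n\sim c(\cdot\mid\omega'_n)$ conditional on the fictitious past.
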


\Cref{prop:necessary} is not sufficient in general, since $\mathcal{M}(\nu)$ may be strictly smaller than the set of deviations in $\Sigma(\pi,\lambda)$.\footnote{To see this, suppose $\rho(m \mid \omega) > 0$ for all $\omega,m$, so that each posterior $p_m$ has full support. Copula deviations  preserve support and therefore cannot generate degenerate beliefs, whereas such posteriors are feasible in $\Sigma(\pi,\lambda)$.} One  implication is that any outcome that induces a uniform equilibrium payoff corresponds to a Bayes correlated equilibrium \citep{bergemann2016bayes} with prior $\mu$, since, in the limit, each period generates a Bayes plausible distribution of beliefs with prior $\mu$ that satisfies obedience.

In \cite{rsv}, where the sender reports states and observes the receiver's actions, copula deviations are both necessary and sufficient to characterize equilibrium outcomes for pseudo-renewal chains. In our setting, however, the sender sends messages rather than reporting states, so deviations that preserve the marginal distribution of messages may still alter the temporal structure. Thus, while copula deviations remain necessary, they are not sufficient to characterize equilibrium outcomes, as illustrated in Example \ref{ex:counter-example}.

Finally, we consider the special case where the states are drawn i.i.d. with distribution $\mu$. In this case, the intertemporal correlations discussed above disappear, and the static partial-commitment benchmark becomes exact.

\begin{proposition}\label{prop:iid}
Suppose that the state process is i.i.d. with distribution $\mu$. Then the set of uniform equilibrium payoffs coincides with  the convex hull of equilibrium payoffs of the persuasion model with partial commitment:
\[
\mathcal{U}=\mathrm{Co} \, \mathcal{E}(\mu).
\]
\end{proposition}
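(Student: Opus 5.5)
The inclusion $\mathrm{Co}\,\mathcal E(\mu)\subseteq\mathcal U$ is \Cref{th:main}, since i.i.d.\ is the pseudo-renewal case with $\alpha=0$. It therefore remains to show $\mathcal U\subseteq \mathrm{Co}\,\mathcal E(\mu)$. I will fix $e\in\mathcal U$ and, for each $\epsilon>0$, a profile $(\sigma^*,\tau^*)$ that is an $\epsilon$-equilibrium for all $\delta\ge\delta_0$. Passing to a limit along $\epsilon\to0$ and $\delta\to1$, the plan is to exhibit $e$ as a limit of convex combinations of payoffs that are ``almost'' static partial-commitment equilibria. I then close the argument by upper hemicontinuity of $(\lambda)\mapsto\mathcal E(\mu,\lambda)$ and compactness: \eqref{Eq-R} and \eqref{Eq-S} are closed conditions, and $\Sigma(\mu,\lambda)$ varies continuously in $\lambda$ when $\lambda$ has full support on its own support. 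Since payoffs lie in $\mathbb R^2$, Carathéodory lets me use a bounded number of components.

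\textbf{Step 1 (decomposition).} Under $(\sigma^*,\tau^*)$, each period $n$ and each public history $h_{n-1}$ of messages (and the receiver's own actions) induce a conditional message distribution $\lambda_{n,h}$ and posteriors $p_{n,h,m}$. Because states are i.i.d., the prior $q_n$ before the message equals $\mu$ after every history. This is the key place where i.i.d.\ is used: each period is a Bayes plausible split of the same prior $\mu$. The receiver's play gives $\kappa_{n,h}$. The discounted payoff is then $\sum_n(1-\delta)\delta^{n-1}\mathbb E_h[\text{payoff of }(\lambda_{n,h},p_{n,h},\kappa_{n,h})]$, which is a convex combination of static outcomes with prior $\mu$.

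\textbf{Step 2 (receiver).} As argued in the proof sketch, the sender cannot see actions, so myopic best response is optimal. The $\epsilon$-equilibrium condition then bounds the discounted weight of pairs $(n,h)$ in which $\kappa_{n,h}$ is $\eta$-suboptimal by $\epsilon/\eta$. I can therefore replace these components and keep \eqref{Eq-R} up to small error.

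\textbf{Step 3 (sender).} This is the main obstacle, namely showing that every component approximately satisfies \eqref{Eq-S}. Given any $(n,h)$ with weight $w$ and any alternative $\{\tilde p_m\}\in\Sigma(\mu,\lambda_{n,h})$, I would construct a dynamic deviation. At period $n$ after history $h$, the sender draws a message according to the signaling policy $\tilde\rho$ implementing $\tilde p$ instead of $\sigma^*$. Because $\tilde\rho$ preserves $\lambda_{n,h}$ and states are i.i.d., hence independent of the past and of future states, the distribution of the entire future message sequence is unchanged. Indeed, the sender's continuation play $\sigma^*$ depends on past states only through realized messages and past states, which I handle by letting the sender feed $\sigma^*$ a fictitious state $\omega'_n$ drawn from the conditional $\sum_m \tilde\rho(m|\omega)\,\text{Pr}_{\sigma^*}(\omega'|m,h)$ coupling, which has law $\mu$ independent of the rest. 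The receiver's play is therefore unaffected at every other period, and the sender's gain equals $w$ times the static gain of $\tilde p$ over $p_{n,h}$. Summing over a measurable selection of optimal static deviations across all $(n,h)$ simultaneously gives a single deviation $\tilde\sigma$, since the fictitious-state trick keeps every later period's message law and $\kappa$ intact. Its total gain is therefore $\sum w\cdot(\text{static gain})\le\epsilon$. Hence the weight of components violating \eqref{Eq-S} by more than $\eta$ is at most $\epsilon/\eta$.

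\textbf{Step 4 (conclusion).} Taking $\eta=\sqrt\epsilon$, the payoff of $(\sigma^*,\tau^*)$ is within $O(\sqrt\epsilon)$ of a convex combination of payoffs of $\eta$-equilibria of static models with prior $\mu$. Letting $\epsilon\to0$, compactness and closedness of the equilibrium conditions place $e$ in the closure of $\mathrm{Co}\,\mathcal E(\mu)$, which is closed as the convex hull of a compact set. The delicate point I expect is exactly this limit. Near messages with $\lambda(m)\to0$, the constraint set $\Sigma(\mu,\lambda)$ can jump, so I would first merge or drop vanishing-probability messages. Their weight is negligible, and removing a message only enlarges the sender's deviations by a vanishing amount.
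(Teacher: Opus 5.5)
Your overall route is the paper's. You decompose play into per-history static outcomes with prior $\mu$, bound the weight of receiver-suboptimal periods by $\epsilon/\eta$ via the myopic deviation, and apply $\lambda_{n,h}$-preserving one-shot deviations wherever they gain at least $\eta$ to bound the weight of \eqref{Eq-S}-violations by $\epsilon/\eta$. You then conclude by compactness.

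The one step where you go beyond the paper, the fictitious-state device in Step 3, does not work as written. Under $\sigma^*$ the message $m_n$ may depend on the sender's \emph{earlier} states, and so may later messages. Replacing only $\omega_n$ by $\omega'_n\sim p_{n,h,m_n}$ keeps the true earlier states in the continuation history, while $\tilde\rho$ makes $m_n$ independent of them. For instance, take $M=\Omega$, let $m_1$ be constant, and let $\sigma^*$ send $m_2=m_3=\omega_1$. Then $p_{2,h,m}=\mu$ for every $m$, and full revelation of $\omega_2$ lies in $\Sigma(\mu,\lambda_{2,h})$. Under your construction, however, $m_3=\omega_1$ is independent of $m_2=\omega_2$, whereas $m_2=m_3$ always under $\sigma^*$. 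So the law of messages changes, and the receiver's continuation play can react to it (for example, by punishing as in \Cref{ex:counter-example}). The gain is then no longer $w$ times the static gain.

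The fix is to resample the sender's \emph{whole} private history. After sending $m_n\sim\tilde\rho(\cdot\mid\omega_n)$, draw $(\omega'_1,\ldots,\omega'_n)$, together with any private randomization used by $\sigma^*$, from its conditional law under $\sigma^*$ given $(m_1,\ldots,m_n)$. Then continue with $\sigma^*$ evaluated at this fictitious history. Since $m_n$ has conditional law $\lambda_{n,h}$ given $h$, the fictitious history and the message sequence have exactly their joint $\sigma^*$-law. By the i.i.d.\ assumption, future true states are independent of everything generated so far. Hence every other period's law of $(\omega_k,a_k)$ is unchanged, and the device can be applied at all histories simultaneously.

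The paper's proof simply asserts that history-by-history application leaves the law of messages unchanged. With this repair, your Step 3 is more careful than the paper's.

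Finally, your worry in Step 4 is unnecessary. Written in terms of joint distributions on $\Omega\times M$ with state marginal $\mu$, \eqref{Eq-S} is a bilinear inequality over the couplings of $\mu$ and $\lambda$, a set that varies continuously in $\lambda$. Messages of vanishing weight contribute nothing, so no merging of messages is needed.
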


The inclusion follows from \Cref{th:main}. For the converse, in each period the sender's strategy induces a Bayes plausible distribution over posterior beliefs with weights \(\lambda\). Since the receiver's obedience condition holds in the limit, if the induced payoff does not belong to \(\mathcal E(\mu,\lambda)\), then the sender must have a profitable deviation preserving \(\lambda\). Since states are drawn i.i.d., such a deviation does not affect the distribution of message sequences and is therefore undetectable and can be implemented independently across periods, yielding a contradiction. Therefore, the induced payoff must belong to \(\mathcal E(\mu,\lambda)\), and the overall payoff is a convex combination of such payoffs across marginals \(\lambda\).

\section{Discussion}
\label{sec:discussion}

\subsection{Uniform equilibrium and Perfect Bayesian equilibrium}
\label{sec:uniform-pbe}

The strategy profile constructed in \Cref{th:main} does not constitute a Perfect Bayesian equilibrium (PBE).\footnote{Uniform equilibrium and PBE are not ordered by inclusion. PBE imposes exact sequential rationality after every history at a fixed discount factor, whereas uniform equilibrium requires approximate optimality uniformly for all sufficiently high discount factors.} The key issue is the receiver's sequential rationality. Since the receiver's current action does not affect future play, PBE requires him, after every history, to take a myopic best response to his belief.

However, if the quota for message \(m\) has already been reached within a block, the receiver uses the response rule corresponding to a different message \(\tilde m\neq m\). At such histories, the receiver need not take a myopic best response. Hence, the strategy profile is not sequentially rational for the receiver. This failure has a negligible impact on the discounted payoff: the fraction of periods in which quota replacement occurs, and hence in which a suboptimal action may be taken, vanishes as the block length grows. The receiver's gain from always taking myopic best responses can therefore be made arbitrarily small for sufficiently patient players. Moreover, as \Cref{prop:necessary} shows, every uniform equilibrium payoff is generated by an outcome that satisfies receiver obedience in the limit.

Uniform equilibrium captures a natural form of statistical discipline: the receiver holds the sender accountable for the empirical distribution of her messages. If the receiver always took a myopic best response, this disciplining device would disappear. These vanishing departures from sequential rationality are what allow the dynamic game to generate equilibrium payoffs that differ from those in \cite{KUVALEKARetal}. In particular, the sender can attain the Bayesian persuasion payoff when her payoff is state-independent.

\subsection{On the pseudo-renewal assumption}
\label{sec:pseudo-renewal}

The pseudo-renewal assumption ensures that the equilibrium posterior beliefs can be induced recursively, period after period. This assumption is sufficient, but not necessary, for the implementation. More generally, for any transition matrix $Q$, the same construction works whenever there exists a stochastic matrix $W$ such that
\[
p_mQ=\sum_{r\in M}W_{mr}p_r
\quad\text{for every }m\in M, \text{ and }
\lambda W=\lambda.
\]
The first condition ensures that the posterior collection is closed under the Markov update, while the second ensures that the induced transition preserves the message distribution. The latter condition is automatic when the  posteriors are affinely independent, since Bayes plausibility and the invariance of $\mu$ imply that the corresponding weights are unique.  The closure condition is   closely related to the absorbing condition of
\cite{Lehrer2026}.\footnote{A non-empty set $X\subseteq \Delta(\Omega)$ is
$Q$-absorbing if $pQ\in \mathrm{Co} \, X$ for every $p\in X$.}

The incentive argument is unchanged: the quota mechanism ensures that any sender
deviation in the dynamic game corresponds to a payoff from a deviation in the
static persuasion model with partial commitment. Some
deviations in the static model that preserve the marginal distribution over
messages may fail to be closed under the Markov update $Q$ and therefore need not
be feasible dynamically. This only reduces the sender's feasible deviations; hence,
if the sender has no profitable deviation in the static model, she has no
profitable deviation in the dynamic game.

The role of the pseudo-renewal assumption is to make this closure property uniform.  If every finite Bayes plausible collection of posteriors with prior $\mu$ is closed under the Markov update, then the chain must be pseudo-renewal; see \Cref{app:pseudo-renewal-necessity}.

\section{Conclusion }
\label{sec:conclusion}
We studied a dynamic sender-receiver game in which the sender receives no feedback about the receiver's action. The paper provides a micro-foundation for persuasion with partial commitment by showing how its key restriction can arise endogenously. In the static benchmark, the sender can deviate only through policies that preserve the marginal distribution over messages. In the dynamic game, this restriction is enforced by the receiver's monitoring of message frequencies.

This yields a partial characterization of uniform equilibrium payoffs, which becomes exact when states are i.i.d. In particular, when the sender's payoff is state-independent, dynamic interaction closes the commitment gap and allows the sender to attain the Bayesian persuasion payoff.

Several directions for future research remain. First, a full characterization of uniform
equilibrium payoffs remains open. Second, one could study settings in which the sender
receives imperfect feedback about the receiver's actions. Third, it would be interesting to
examine the role of richer statistical tests, such as tests based on higher-order frequencies
of messages (see \citealp{escobar2013efficiency,renou2015approximate,horner2017keeping}).
Finally, extending the analysis to more general dynamic environments is a promising
direction.

\begin{appendix} 
\section{Omitted Proofs}\label{proofs}

\subsection{Proof of \Cref{th:main}}
First, we prove that any equilibrium payoff of the  static persuasion model with partial commitment is a uniform equilibrium payoff. Then we show that any convex combination of these payoffs across message distributions is also a uniform equilibrium payoff.

Fix $e \in \mathcal{E}(\mu,\lambda)$. Let $(\sigma^*,\tau^*)$ denote the strategy profile constructed in the proof sketch. The proof proceeds in two steps. First, we show that the payoff induced by $(\sigma^*,\tau^*)$ can be made arbitrarily close to $e$ for sufficiently patient players. Second, we show that for any $\epsilon>0$, the strategy profile $(\sigma^*,\tau^*)$ is an $\epsilon$-equilibrium for all sufficiently patient players.  Recall that play is divided into blocks of length $N$.

\begin{lemma}\label{lemma:convergence}
For every $\epsilon>0$, there exists $N_0 \in \mathbb{N}$ and $\delta_0<1$ such that, for every $N\geq N_0$ and $\delta \geq \delta_0$, the strategy profile $(\sigma^*,\tau^*)$ induces a payoff within $\epsilon$ of $e$.
\end{lemma}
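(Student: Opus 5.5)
The plan is to show something stronger than the statement asks for. Given $\epsilon>0$, I will find $N_0$ such that for every $N\ge N_0$ and \emph{every} $\delta\in(0,1)$, the payoff induced by $(\sigma^*,\tau^*)$ is within $\epsilon$ of $e$. Then any $\delta_0$ works, and uniformity in $N$ is automatic. The argument compares actual play with a fictitious profile in which the receiver always follows the sender's message.

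\textbf{Step 1 (exact stage payoffs without quotas).} Under $\sigma^*$, every period satisfies $\mathbb{P}_{\sigma^*}(\omega_n=\omega,m_n=m)=\lambda(m)p_m(\omega)$. This is stated in the proof sketch and follows by induction from the recursive decomposition $q_{m'}=\alpha p_{m'}+(1-\alpha)\mu$ together with Bayes' rule. So if the receiver played $\kappa(\cdot\mid m_n)$ in every period, the expected stage payoff in every period would equal exactly $e$, and the discounted payoff would equal $e$ for all $\delta$.

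Actual play differs from this only in replacement periods, where $r_n\neq m_n$. Writing $\|u\|=\max_{i,\omega,a}|u_i(\omega,a)|$, this gives
\[
\|\boldsymbol{\gamma}^\delta(\sigma^*,\tau^*)-e\|_\infty\le 2\|u\|\,(1-\delta)\sum_{n\ge1}\delta^{n-1}\,\mathbb{P}(r_n\neq m_n).
\]

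\textbf{Step 2 (replacements are rare early in a block).} Fix $\eta\in(0,1)$, and let $k$ be the position of period $n$ within its block. Before the first replacement in a block, $r=m$. So a replacement at position $k$ requires that, for some $m$, the number of times $m$ was sent in the first $k-1$ periods of the block is at least $N\lambda(m)$.

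The message process under $\sigma^*$ is itself a stationary Markov chain on $M$, with transition $\alpha\mathbf{1}_{\{m=m'\}}+(1-\alpha)\lambda(m)$ and stationary law $\lambda$. Since $\alpha<1$, its correlations decay geometrically, which gives a variance bound $\mathrm{Var}\big(\sum_{j\le k}\mathbf{1}_{\{m_j=m\}}\big)\le Ck$, where $C$ depends only on $\alpha$. For $k\le(1-\eta)N$ the expected count is at most $(1-\eta)N\lambda(m)$. Chebyshev's inequality and a union bound over $m$ with $\lambda(m)>0$ then give
\[
\mathbb{P}(r_n\neq m_n)\le \sum_m \frac{CN}{(\eta N\lambda(m))^2}=:\frac{K}{\eta^2N}.
\]
This bound is uniform in the block index and in $\delta$. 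Messages with $\lambda(m)=0$ are never sent, so they play no role. For positions $k>(1-\eta)N$, I use the trivial bound $1$.

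\textbf{Step 3 (discount weights, the main obstacle).} The difficulty is that $\delta$ may be small relative to $N$, or large. This is handled by monotonicity of the weights. Within any block, the weights $\delta^{n-1}$ are decreasing in $n$. Hence the total weight of the last $\lceil\eta N\rceil$ periods of a block is at most $\lceil\eta N\rceil/N\le \eta+1/N$ times that block's total weight. Summing over blocks, the share of discounted weight on late periods is at most $\eta+1/N$, for every $\delta$.

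Combining with Step 2, the bound from Step 1 becomes
\[
\|\boldsymbol{\gamma}^\delta(\sigma^*,\tau^*)-e\|_\infty\le 2\|u\|\Big(\eta+\tfrac1N+\tfrac{K}{\eta^2N}\Big)\qquad\text{for all }\delta.
\]
To conclude, choose $\eta=\epsilon/(6\|u\|)$, then $N_0$ large enough that the remaining terms are below $\epsilon/(3\|u\|)$, restricting to $N$ with $N\lambda(m)$ integral.

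The most delicate check is the variance bound for the message chain. It follows from the explicit spectral structure: the transition is $\alpha I+(1-\alpha)\mathbf{1}\lambda$, so $\mathrm{Cov}(\mathbf{1}_{\{m_i=m\}},\mathbf{1}_{\{m_j=m\}})=\alpha^{|i-j|}\lambda(m)(1-\lambda(m))$, which is summable. The first period uses the static policy and has the same law, so it needs no separate treatment.
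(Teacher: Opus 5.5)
Your argument is correct, and it takes a different route from the paper. The paper fixes $N$ and uses that expected play repeats from block to block. This identifies $\lim_{\delta\to1}\gamma^\delta_i(\sigma^*,\tau^*)$ with a one-block average written in terms of the empirical posteriors $\overline p_m$. The paper then appeals to a law of large numbers for Markov chains to claim that quotas bind rarely, so $\overline p_m\approx p_m$.

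You instead bound the payoff error directly by $2\|u\|$ times the discounted probability that $r_n\neq m_n$. You control this probability through the first time some \emph{sent} count reaches its quota, using Chebyshev with the explicit message chain $\alpha I+(1-\alpha)\mathbf 1\lambda$ and its covariances $\alpha^{|i-j|}\lambda(m)(1-\lambda(m))$. You then handle discounting by the monotonicity of $\delta^{n-1}$ within each block.

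Your route gives a bound valid for every $\delta\in(0,1)$ and explicit in $N$, so it delivers the lemma's literal quantifiers: a single $\delta_0$ that works for all $N\ge N_0$. The paper's limit step, as written, only yields a threshold $\delta_0(N)$. Its convergence to the block average is of order $(1-\delta)N$, which is enough for the theorem but not uniform in $N$.

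Your first-replacement argument also makes the paper's law-of-large-numbers step precise. Quotas are counted on \emph{used} messages, so a replacement can consume another message's quota and trigger further replacements. Your containment $\{r_k\neq m_k\}\subseteq\{\exists m:\ \text{sent count of } m \text{ in the first } k-1 \text{ periods}\ge N\lambda(m)\}$ handles this cleanly.

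The paper's route is shorter, and it sets up the block-average representation that it reuses for sender deviations in the next lemma. There the law of messages is no longer your explicit chain, so your Chebyshev computation would not transfer.

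One point you should state rather than assert. Your covariance formula requires the message process to be Markov in its own filtration. This rests on the posterior after $m_n$ being $p_{m_n}$ regardless of earlier messages, which follows from the same induction as your Step 1.
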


\begin{proof}

First, we show that at any period $n$, the expected joint distribution over state $\omega$ and message $m$ equals $\lambda(m)p_m(\omega)$. The claim follows by induction. It is immediate in the first period from the static signaling rule. If it holds at period $n-1$, then after message $\tilde m$ the next-period belief is $q_{\tilde m}=p_{\tilde m}Q$, so
\[
\PP_{\sigma^*,\tau^*}(m_{n-1}=\tilde m,\omega_n=\omega)
=
\lambda(\tilde m)q_{\tilde m}(\omega).
\]
Therefore,
\begin{align*} 
   \mathbb{E}_{\sigma^*}\!\left[\boldsymbol{1}_{\{\omega_n=\omega,m_n=m\}}\right]
   &=  
\sum_{\tilde m \in M}
\sigma^*_n(m_n=m \mid m_{n-1}=\tilde{m},\omega_n=\omega)
\PP_{\sigma^*, \tau^*}(m_{n-1}=\tilde{m}, \omega_n=\omega)\\
    &= \sum_{\tilde m \in M}
    \frac{p_m(\omega)}{q_{\tilde{m}}(\omega)}
    \big((1-\alpha)\lambda(m) + \alpha \boldsymbol{1}_{\{\tilde{m}=m \}}\big)
    q_{\tilde{m}}(\omega)\lambda(\tilde m)\\
    &= p_m(\omega)
    \sum_{\tilde m \in M}
    \big((1-\alpha)\lambda(m) + \alpha \boldsymbol{1}_{\{\tilde{m}=m \}}\big)
    \lambda(\tilde m)\\
    &= \lambda(m) p_m(\omega).
\end{align*}

If $\lambda(m)>0$, define $$ \overline{p}_{m}(\omega):=  \EE_{\sigma^*,\tau^*} \Big[ \frac{1}{N \lambda(m)} \sum_{t=1}^N \boldsymbol{1}_{\{\omega_t=\omega, r_t=m \}}  \Big].$$ 
So, \(\lambda(m)\overline p_m(\omega)\) is the expected fraction of periods in a block in which the state is \(\omega\) and the message used is \(m\).

For a fixed block length $N$, only the quota counters are reset at the beginning of each block. Since the initial distribution is invariant and the sender's strategy is recursive across periods, the expected play is the same from block to block. Hence, as $\delta\to 1$, the discounted payoff of player $i\in\{S,R\}$ converges to the expected average payoff over one block.
\begin{equation} \label{eq:limit}
 \lim_{\delta \to 1} \gamma^\delta_i(\sigma^*,\tau^*)=\sum_{m \in M} \lambda(m) \sum_{\omega \in \Omega} \overline{p}_m(\omega) \sum_{a \in A} \kappa(a \mid m) u_i(\omega,a).
\end{equation}

We will show that for sufficiently large block length, this discounted payoff vector can be made arbitrarily close to the target $e$.

First, by a law of large numbers for Markov chains, for sufficiently large \(N\), the quota of each message is exhausted only in a small fraction of periods with high probability. Hence, under \((\sigma^*,\tau^*)\), the expected fraction of periods in which the message sent differs from the message used can be made arbitrarily small. It follows that \(\overline p_m\) can be made arbitrarily close to \(p_m\) for every \(m \in M\).

 Using the fact that payoffs are bounded and the linearity of the expected payoff on the right-hand side of \eqref{eq:limit}, it follows that for every $\epsilon>0$, there exists $N_0 \in \mathbb{N}$ and $\delta_0 \in (0,1)$ such that for all $N \geq N_0$ and $\delta \geq \delta_0$, it holds that $$ \|\gamma^\delta(\sigma^*,\tau^*) - e\|_\infty \leq \epsilon.$$
 \end{proof}

We now show that the strategy profile $(\sigma^*,\tau^*)$ is an  $\epsilon$-equilibrium for sufficiently patient players. 

\begin{lemma}\label{lemma:equilibrium}
For every $\epsilon > 0$, there exists $N_0 \in \mathbb{N}$ and $\delta_0 \in (0,1)$ such that for all $N \geq N_0$ and $\delta \ge \delta_0$, the strategy profile $(\sigma^*,\tau^*)$ is an $\epsilon$-equilibrium.
\end{lemma}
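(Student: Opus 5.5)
We treat the receiver and the sender separately. In both cases the discounted payoff of an arbitrary deviation is compared with an average over blocks of length $N$. Since every block has the same length and the quota counters reset at block boundaries, for any strategy profile
\[
\gamma_i^\delta=\sum_{k\ge0}\bigl(1-\delta^N\bigr)\delta^{kN}\,\bar g_{i,k}+O\bigl(N(1-\delta)\bigr),
\]
where $\bar g_{i,k}$ is the expected average payoff of player $i$ in block $k$. The error term is uniform in the strategies, because payoffs are bounded. So for fixed $N$ it suffices to show that the expected block-average payoff of any deviation is at most the equilibrium block-average plus $\epsilon/2$, uniformly over blocks and histories at the block start. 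Then we pick $\delta_0$ with $N(1-\delta_0)$ small. Lemma \ref{lemma:convergence} already gives the equilibrium block-average within $\epsilon/4$ of $e$ for $N\ge N_0$.

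\textbf{Receiver.} The sender's strategy $\sigma^*$ does not depend on actions, so the receiver's action never affects the joint law of states and messages. The best reply to $\sigma^*$ is therefore $\tau^{\mathrm{myp}}$, period by period. Under $\sigma^*$ the posterior after message $m_n$ is exactly $p_{m_n}$, even conditional on the full past message history, because the sender's rule depends only on $(\omega_n,m_{n-1})$ and Bayes' rule gives $p_m$ regardless of $m_{n-1}$. So whenever $r_n=m_n$, the action drawn from $\kappa(\cdot\mid m_n)$ is a myopic best response by \eqref{Eq-R}. The receiver's per-block loss is then at most $2\|u_R\|_\infty$ times the expected number of periods with $r_n\neq m_n$, divided by $N$. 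This is exactly the quantity that the law-of-large-numbers step in Lemma \ref{lemma:convergence} shows is small. I would make that step explicit: the number of quota-binding periods in a block is at most $\sum_m\bigl(\#\{n\le N: m_n=m\}-N\lambda(m)\bigr)^+$. Each message frequency converges to $\lambda(m)$ in $L^1$, since the message process is a finite ergodic Markov chain, so the expectation is $o(N)$.

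\textbf{Sender.} Fix any $\tilde\sigma$ and a block. Define $\tilde p_m$ as in the proof sketch, conditioning on the history at the block start. Two facts are needed.

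First, within a block each $m$ is used exactly $N\lambda(m)$ times almost surely, under any sender strategy. This follows because the quota rule always finds an unfilled message, since the quotas sum to $N$.

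Second, the block-average law of the state is close to $\mu$ regardless of the sender's behaviour, since the state process is exogenous. Conditional on the state at the block start, the average over $N$ periods of $\PP(\omega_n=\omega)$ is within $C/N$ of $\mu(\omega)$ by geometric ergodicity; under pseudo-renewal, $\|\pi Q^n-\mu\|\le\alpha^n$.

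Together these give $\sum_m\lambda(m)\tilde p_m=\mu+O(1/N)$. Moreover, the sender's block-average payoff equals $\sum_m\lambda(m)\sum_\omega\tilde p_m(\omega)\sum_a\kappa(a\mid m)u_S(\omega,a)$ exactly, since the receiver always plays $\kappa(\cdot\mid r_n)$.

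The main obstacle is that $\{\tilde p_m\}$ lies in $\Sigma(\pi_N,\lambda)$ with $\pi_N$ only approximately equal to $\mu$, while \eqref{Eq-S} is stated at $\mu$. I would handle this by projection. Set $\hat p_m=\tilde p_m+(\mu-\pi_N)$, possibly mixed with $\mu$ to keep it a probability vector: $\hat p_m=(1-t)\tilde p_m+t\mu+\dots$ with $t=O(1/N)$, using full support of $\mu$. This yields $\{\hat p_m\}\in\Sigma(\mu,\lambda)$ with $\|\hat p_m-\tilde p_m\|=O(1/N)$. Then \eqref{Eq-S} gives that the sender's payoff under $\{\hat p_m\}$ is at most the payoff under $\{p_m\}$. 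The payoff under $\{p_m\}$ is within $\epsilon/4$ of the equilibrium block payoff by Lemma \ref{lemma:convergence}. Linearity bounds the change from $\tilde p$ to $\hat p$ by $O(\|u_S\|_\infty/N)$.

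Choosing $N_0$ so that all these error terms are below $\epsilon/4$, and then $\delta_0$ so that $N(1-\delta_0)$ is small, proves the claim for both players simultaneously.
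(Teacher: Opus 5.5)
Your argument is the paper's in substance. The quota rule forces block-average posteriors with weights $\lambda$, \eqref{Eq-S} then bounds the sender's block payoff, and the receiver's loss against $\tau^{\mathrm{myp}}$ is controlled by the fraction of quota-binding periods. On one point you are more careful than the paper. The paper defines $\tilde p_m$ on the first block only and asserts that the deviation's discounted payoff converges to that block's average as $\delta\to1$. That assertion neither handles deviations that change from block to block nor yields a $\delta_0$ uniform in $\tilde\sigma$. Your decomposition $\gamma_i^\delta=\sum_k(1-\delta^N)\delta^{kN}\bar g_{i,k}+O(N(1-\delta))$, with an error uniform in strategies and applied to every block, fixes both. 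Your projection step is correct, since full support of $\mu$ makes $t=O(1/N)$ enough, but it is unnecessary. If you use unconditional block-$k$ averages instead of conditioning on the block-start history, then $\sum_m\lambda(m)\tilde p^k_m=\mu$ holds exactly, because each $\omega_n$ has marginal law $\mu$ whatever the sender does. That is all your decomposition requires.

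One explicit step in the receiver part is false as written. You claim the number of quota-binding periods in a block is at most $\sum_m\bigl(\#\{n\le N:m_n=m\}-N\lambda(m)\bigr)^+$. This fails because replacements use up the quotas of other messages and can cascade. Take $M=\{m_1,m_2,m_3\}$, $\lambda$ uniform, $N=3$, and sent messages $(m_1,m_1,m_2)$. Period $2$ is replaced. If the replacement is $m_2$ (probability $1/2$), then period $3$ is replaced as well. That gives two quota-binding periods against a right-hand side of $1$.

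The conclusion survives with either of two repairs.
\begin{enumerate}
\item A replacement of $m$ by $m'$ requires $m$'s quota to be full and $m'$'s not, so it always goes from an earlier-filled message to a later-filled one. The replacement flow between messages is therefore acyclic, with net outflow $\#\{n:m_n=m\}-N\lambda(m)$ at $m$. A path decomposition then gives your bound multiplied by $|M|-1$.
\item More simply, note that $\sigma^*$ never sends messages with $\lambda(m)=0$. So no replacement occurs before the first period $T$ at which some sent-message count reaches its quota, and there are at most $N-T$ quota-binding periods. The law of large numbers for the message chain gives $T/N\to1$ in probability.
\end{enumerate}
Either way the expected fraction of quota-binding periods is $o(1)$, and the rest of your proof goes through.
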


\begin{proof}
Fix $\epsilon>0$. We show that there exist $N_0$ and $\delta_0$ such that for all $N\geq N_0$ and all $\delta\geq \delta_0$, no player can gain more than $\epsilon$ by deviating from $(\sigma^*,\tau^*)$.

\textbf{Sender's deviations.} Since quota counters are reset at the beginning of each block, any sender deviation induces in each block a collection of beliefs with weights $\lambda$. These collections are Bayes-plausible with prior $\mu$, so the sender's static incentive constraint applies block by block.

Fix any deviation $\tilde{\sigma}$. Since under $\tau^*$ the receiver plays according to message $m$ exactly $N\lambda(m)$ times in each block, define
\[
\tilde p_m(\omega)
:=
\frac{1}{N\lambda(m)}
\mathbb{E}_{\tilde{\sigma},\tau^*}
\left[
\sum_{n=1}^N \boldsymbol{1}_{\{\omega_n=\omega,r_n=m\}}
\right].
\]
Then $(\tilde p_m)_{m\in M}$ satisfies Bayes plausibility with weights $\lambda$, that is, $\mu=\sum_{m} \lambda(m)\tilde p_m.$ So, the collection $(\tilde p_m)_{m\in M} \in \Sigma(\mu,\lambda)$.

For fixed $N$, as $\delta\to 1$, the sender's discounted payoff under $(\tilde{\sigma},\tau^*)$ converges to the expected average payoff within a block. So, for $\delta$ sufficiently close to $1$,
\[
\gamma_S^\delta(\tilde{\sigma},\tau^*)
\le
\sum_{m\in M} \lambda(m)\sum_{\omega\in\Omega}\tilde p_m(\omega) \sum_{a \in A} \kappa( a \mid m) u_S(\omega,a)
+\frac{\epsilon}{2}.
\]
Since $(p_m)_{m\in M}$ has no profitable deviation within $\Sigma(\mu,\lambda)$, we have
\[
\sum_{m\in M} \lambda(m)\sum_{\omega\in\Omega}\tilde p_m(\omega)\sum_{a \in A} \kappa(a \mid m) u_S(\omega,a)
\le
e_S.
\]
Combining the two inequalities gives
\[
\gamma_S^\delta(\tilde{\sigma},\tau^*)
\le e_S+\frac{\epsilon}{2}.
\]
By \Cref{lemma:convergence}, for $N$ large enough and $\delta$ sufficiently close to $1$,
\[
\gamma_S^\delta(\sigma^*,\tau^*)\ge e_S-\frac{\epsilon}{2}.
\]
Therefore, for sufficiently large $N$ and $\delta$ close to $1$,
\[
\gamma_S^\delta(\tilde{\sigma},\tau^*)
-
\gamma_S^\delta(\sigma^*,\tau^*)
\le \epsilon.
\]

\textbf{Receiver's deviations.} Because the receiver's actions are unobservable, his choice in period $n$ does not affect the sender's future behavior. Hence, against $\sigma^*$, a myopic best response at every period is optimal. Let $\tau^{\mathrm{myp}}$ denote the myopic best-response strategy in which, after every history, the receiver best responds to his current belief $p_n$.

It therefore suffices to show that deviating to $\tau^{\mathrm{myp}}$ does not result in a gain greater than $\epsilon$. By construction, $\tau^*$ plays a myopic best response to the sender's message $m$ whenever the quota for $m$ is not exhausted.  By the law of large numbers for Markov chains, for $N$ large enough the empirical frequency of messages sent is close to $\lambda$ with high probability. Hence, $\tau^*$ coincides with $\tau^{\mathrm{myp}}$ in all but a small fraction of periods in each block. Since stage payoffs are bounded, the resulting difference in the  discounted payoffs can be made smaller than $\epsilon$ by choosing $N$ sufficiently large. Thus,  for sufficiently large $N$ and $\delta$ close to $1$,
\[
\gamma_R^\delta(\sigma^*,\tau^{\mathrm{myp}})
-
\gamma_R^\delta(\sigma^*,\tau^*)
\le \epsilon.
\]
Therefore, the receiver cannot gain more than $\epsilon$ by deviating.
\end{proof}

\Cref{lemma:convergence} and \Cref{lemma:equilibrium} show that any $e \in \mathcal{E}(\mu,\lambda)$ is a uniform equilibrium payoff. We now show that any convex combination of such payoffs is also attainable.

Pick any $e \in \mathrm{Co}\,\mathcal E(\mu)$. Since payoffs lie in a finite-dimensional space, $e$ can be written as a finite convex combination. Thus, there exists an integer $n$, weights $(\eta_i)_{i=1}^n$ with $\eta_i\ge 0$ and $\sum_{i=1}^n \eta_i=1$, and marginals $\lambda_i\in\Delta(M)$ such that
\[
e=\sum_{i=1}^n \eta_i e_i,
\qquad\text{with } e_i\in \mathcal E(\mu,\lambda_i)\ \text{for each } i=1,\dots,n.
\]

The players use a block strategy. Each block consists of $n$ active sub-blocks, separated by transition phases. In active sub-block $i$, the players play according to the strategy profile $(\sigma_i^*,\tau_i^*)$ associated with $e_i$ and $\lambda_i$.

Before each active sub-block, we insert a transition phase of length $\tilde N$. During this phase, the sender sends an uninformative fixed message and the receiver ignores messages, playing a myopic best reply to his current belief. Since the Markov chain is irreducible and aperiodic, $\tilde N$ can be chosen large enough so that, at the beginning of the next active sub-block, the receiver's belief is arbitrarily close to $\mu$. Hence, the desired posterior beliefs can again be induced recursively, up to an arbitrarily small error. Taking the active sub-blocks sufficiently long relative to $\tilde N$, the discounted weight of transition phases is negligible.

Choose the lengths of the active sub-blocks so that active sub-block $i$ occupies approximately a fraction $\eta_i$ of the active part of the block. Using the same argument as in \Cref{lemma:convergence}, for each active sub-block $i$, taking its length sufficiently large ensures that the payoff induced by $(\sigma_i^*,\tau_i^*)$ is arbitrarily close to $e_i$ for all $\delta$ sufficiently close to $1$. Since the transition phases have negligible discounted weight, it follows that the discounted payoff induced by the overall block strategy can be made arbitrarily close to
\[
\sum_{i=1}^n \eta_i e_i.
\]

It remains to check incentives. Since the transition phases have negligible discounted weight, it is enough to consider deviations during the active sub-blocks. Because the quota counters are reset at the beginning of each active sub-block, deviations can be evaluated separately within active sub-blocks. A deviation may affect the belief at the start of the following transition phase, but after the transition phase the desired posteriors can again be induced, up to an arbitrarily small error.

By \Cref{lemma:equilibrium}, for every $i=1,\dots,n$, there exists $\delta_0^i$ such that for every $\delta \ge \delta_0^i$, no player can gain more than an arbitrarily small amount by deviating in active sub-block $i$. Let $\delta_0=\max_{i=1,\dots,n}\delta_0^i$. Taking the active sub-blocks sufficiently long relative to $\tilde N$, and then taking $\delta$ sufficiently close to $1$, the total discounted gain from any deviation is at most $\epsilon$.

Using the convergence result, this implies that the difference in payoffs between the equilibrium strategy and any deviation can be bounded above by $\epsilon$. Therefore,
\[
\sum_{i=1}^n \eta_i e_i\in \mathcal{U}.
\]

\subsection{Proof of \Cref{prop:necessary}}

Consider the deviation \(\tau^{\mathrm{myp}}\) where the receiver plays a myopic best response to belief \(p_n\) at every history. Since continuation payoffs do not depend on the receiver's action, we have
\begin{equation} \label{eq:receiver-myopic-deviation}
\gamma^\delta_R(\sigma^*,\tau^{\mathrm{myp}}) - \gamma^\delta_R(\sigma^*,\tau^*)
=
\mathbb{E}_{\sigma^*,\tau^*}\Bigg[
\sum_{n=1}^\infty (1-\delta)\delta^{n-1}
\Big(
\mathbb{E}_{p_n}[u_R(\cdot,a^*(p_n))]
-
\mathbb{E}_{p_n}[u_R(\cdot,a_n)]
\Big)
\Bigg].
\end{equation}
Since \((\sigma^*,\tau^*)\) is an \(\epsilon\)-equilibrium, the left-hand side is at most \(\epsilon\).

Fix any \(\eta>0\), and let
\[
R_\eta^n
=
\left\{
\mathbb{E}_{p_n}[u_R(\cdot,a^*(p_n))]
-
\mathbb{E}_{p_n}[u_R(\cdot,a_n)]
\ge \eta
\right\}
\]
denote the event when the receiver's gain in period $n$ is at least $\eta$.

Then \eqref{eq:receiver-myopic-deviation} implies
\begin{align*}
\epsilon
&\ge
\mathbb{E}_{\sigma^*,\tau^*}\Bigg[
\sum_{n=1}^\infty (1-\delta)\delta^{n-1}
\Big(
\mathbb{E}_{p_n}[u_R(\cdot,a^*(p_n))]
-
\mathbb{E}_{p_n}[u_R(\cdot,a_n)]
\Big)
\Bigg] \\
&\ge
\eta\,
\mathbb{E}_{\sigma^*,\tau^*}\Bigg[
\sum_{n=1}^\infty (1-\delta)\delta^{n-1}\mathbf{1}_{R_\eta^n}
\Bigg].
\end{align*}
Hence,
\[
\mathbb{E}_{\sigma^*,\tau^*}\Bigg[
\sum_{n=1}^\infty (1-\delta)\delta^{n-1}\mathbf{1}_{R_\eta^n}
\Bigg]
\le \frac{\epsilon}{\eta}.
\]
Therefore, for every \(\eta>0\), the discounted weight of periods in which the receiver's action is worse than a myopic best response by at least \(\eta\) vanishes as \(\epsilon\to 0\). Equivalently, the induced discounted outcome satisfies the obedience inequalities up to a vanishing error.

Fix $\epsilon>0$, and let $(\sigma^*,\tau^*)$ be an $\epsilon$-equilibrium of the $\delta$-discounted game for all $\delta\ge \delta_0$. Let
\begin{equation*}
\nu(\omega, a)=  (1-\delta)\sum_{n=1}^\infty \delta^{n-1}  \mathbb{E}_{ \sigma^*, \tau^* }[\boldsymbol{1}_{\{\omega_n=\omega, a_n=a\}}]
\end{equation*}
denote the discounted expected outcome induced by the  strategy profile. Since the initial state is drawn from the invariant distribution, it follows that \(\sum_a \nu(\omega,a)=\mu(\omega)\).

For the sender, we construct an undetectable deviation $\sigma'$. Since $(\sigma^*,\tau^*)$ is an $\epsilon$-equilibrium, this deviation can improve the sender's payoff by at most $\epsilon$. Using Lemma 4 in \cite{rsv}, in pseudo-renewal Markov chains, one can construct a sequence of fictitious states $(\xi_n)$ that is statistically indistinguishable from the sequence $(\omega_n)$. In particular, (i) the sequence $\{\xi_n\}_n$ has the same law  as $\{\omega_n\}_n$, (ii) in each period $n$,  the law  of the pair $(\omega_n,\xi_n)$ is given by a distribution $c \in \mathcal{C}(\mu)$ and (iii) the conditional law of $\omega_n$ given $\xi_1,\ldots,\xi_n$ is given by $c(\cdot \mid \xi_n)$. Moreover, the sequence $\xi_n$ can be constructed using information available only in period $n$.

In any period $n$, the sender sends a message according to $\sigma^*$ evaluated at the fictitious history $(\xi_1,m_1,\ldots,\xi_n)$. Formally, given a history $(\omega_1,\xi_1,m_1,\ldots,\omega_n,\xi_n)$ consisting of realized and fictitious states and messages, the strategy $\sigma'$ selects the current message according to the distribution that $\sigma^*$ would assign after the fictitious history $(\xi_1,m_1,\ldots,\xi_n)$.

We now show that the sender's expected payoff under the strategy profile
\((\sigma',\tau^*)\) is equal to
\(\sum_{\omega,\xi,a} c(\omega\mid\xi)\nu(\xi,a)u_S(\omega,a)\). For each period \(n\),
\begin{align*}
    \mathbb{E}_{\sigma',\tau^*}[u_S(\omega_n,a_n)]
    &=
    \sum_{\omega_n}\sum_{a_n}
    \mathbb{P}_{\sigma',\tau^*}(\omega_n,a_n)u_S(\omega_n,a_n)\\
    &=
    \sum_{\omega_n}\sum_{\xi_1,\ldots,\xi_n}\sum_{m_1,\ldots,m_n}\sum_{a_n}
    \mathbb{P}_{\sigma',\tau^*}(\xi_1,m_1,\ldots,\omega_n,\xi_n,m_n,a_n)
    u_S(\omega_n,a_n)\\
&=
\sum_{\omega_n}\sum_{\xi_1,\ldots,\xi_n}\sum_{m_1,\ldots,m_n}\sum_{a_n}
\mathbb{P}_{\sigma',\tau^*}(\omega_n \mid \xi_1,\ldots,\xi_n,m_1,\ldots,m_n) \\
&\qquad\qquad\qquad \times
\mathbb{P}_{\sigma',\tau^*}(\xi_1,m_1,\ldots,\xi_n,m_n,a_n)
u_S(\omega_n,a_n)\\
    &=
    \sum_{\omega_n}\sum_{\xi_n}\sum_{a_n}
    c(\omega_n\mid\xi_n)\,
    \mathbb{P}_{\sigma',\tau^*}(\xi_n,a_n)\,
    u_S(\omega_n,a_n),
\end{align*}
where the last equality uses property (iii). Since the message history is generated from the fictitious history under \(\sigma'\), it contains no additional information about \(\omega_n\) conditional on \((\xi_1,\ldots,\xi_n)\). Hence, $\mathbb{P}_{\sigma',\tau^*}(\omega_n \mid \xi_1,\ldots,\xi_n, m_1,\ldots,m_n)
=
c(\omega_n \mid \xi_n)$.

Since the process \(\{\xi_n\}_{n \in \mathbb{N}}\) has the same law as \(\{\omega_n\}_{n \in \mathbb{N}}\), and since \(\sigma'\) applies \(\sigma^*\) to the fictitious history, the discounted joint distribution over \((\xi_n,a_n)\) under \((\sigma',\tau^*)\) coincides with the discounted joint distribution of \((\omega_n,a_n)\) under \((\sigma^*,\tau^*)\). Hence,
\[
(1-\delta)\sum_{n=1}^\infty \delta^{n-1}
\mathbb P_{\sigma',\tau^*}(\xi_n=\xi,a_n=a)
=
\nu(\xi,a).
\]
Therefore,
\[
\gamma^\delta_S(\sigma',\tau^*)
=
\sum_{\omega \in \Omega} \sum_{\xi \in \Omega}
c(\omega\mid\xi) \sum_{a \in A}\nu(\xi,a)u_S(\omega,a).
\]

Since $(\sigma^*,\tau^*)$ is an $\epsilon$-equilibrium, this deviation can
increase the sender's payoff by at most $\epsilon$. Hence, for every
$c\in\mathcal C(\mu)$,
\[
\sum_{\omega,a}\nu(\omega,a)u_S(\omega,a)
\ge
\sum_{\omega,\xi,a}c(\omega\mid \xi)\nu(\xi,a)u_S(\omega,a)-\epsilon.
\]
Thus, $\nu$ is robust to copula deviations up to an error $\epsilon$.

Finally, consider a sequence $\epsilon_k\to0$. For each $k$, take an
$\epsilon_k$-equilibrium with discount factor $\delta_k$, where
$\delta_k\to1$, and let $\nu^k$ be the induced discounted outcome. By compactness of
$\Delta(\Omega\times A)$, a subsequence converges to some $\nu$. Since
obedience and robustness hold up to a vanishing error along the sequence,
they pass to the limit. Hence, $\nu$ is obedient and robust to copula deviations. Moreover, its state marginal is $\mu$. Finally, since $\nu^k\to\nu$ and the payoffs induced by
$\nu^k$ converge to the uniform equilibrium payoff $e$, we have $$e_i=\EE_\nu \Big[u_i(\omega,a)\Big],
\ \text{ for } i\in\{S,R\}.$$

\subsection{Proof of \Cref{prop:iid}}

The inclusion follows from \Cref{th:main}. We prove the converse.

Fix $\epsilon>0$, and let $(\sigma^*,\tau^*)$ be an $\epsilon$-equilibrium of the $\delta$-discounted game for all $\delta\ge \delta_0$. At any history $h_n=(m_1,\ldots,m_{n-1})$, the sender's strategy induces a marginal distribution $\lambda_{h_n}\in\Delta(M)$ over messages. Since the state process is i.i.d., the receiver's belief before observing the message is always equal to the prior $\mu$. Hence the induced posterior beliefs $(p^m_{h_n})_{m\in M}$ satisfy $\sum_{m\in M}\lambda_{h_n}(m)p^m_{h_n}=\mu$.

The receiver’s equilibrium condition follows as in \Cref{prop:necessary}. Since continuation payoffs do not depend on the receiver's current action, a myopic best response is optimal. Hence, for every $\eta>0$, the discounted weight of periods in which the receiver's payoff is more than $\eta$ below the payoff from a myopic best response is bounded by $\epsilon/\eta$, and thus vanishes as $\epsilon\to0$.

We now consider the sender's incentive constraint. Suppose that at history $h_n$ there exists a feasible one-shot deviation in $\Sigma(\mu,\lambda_{h_n})$ that improves the sender's current payoff by at least $\eta$. Such a deviation preserves the conditional distribution of the current message given $h_n$. Hence, applied history by history, it leaves the law of messages unchanged and is therefore undetectable. Moreover, by the i.i.d.\ assumption, the deviation does not affect continuation payoffs, since future states are independent of the current state.

Let $S_\eta^n$ be the event that in period $n$ such a feasible one-shot deviation exists. If the sender applies these deviations whenever they are available, then
\[
\gamma^\delta_S(\sigma',\tau^*)-\gamma^\delta_S(\sigma^*,\tau^*)
\ge
\eta \,
\mathbb E_{\sigma^*,\tau^*}\left[
\sum_{n=1}^\infty (1-\delta)\delta^{n-1}\mathbf 1_{S_\eta^n}
\right].
\]
Since $(\sigma^*,\tau^*)$ is an $\epsilon$-equilibrium, it follows that
\[
\mathbb E_{\sigma^*,\tau^*}\left[
\sum_{n=1}^\infty (1-\delta)\delta^{n-1}\mathbf 1_{S_\eta^n}
\right]
\le \frac{\epsilon}{\eta}.
\]
Thus, as $\epsilon\to0$, the discounted weight of periods in which the sender's static incentive constraint is violated by at least $\eta$ vanishes.

It follows that, up to a discounted error that vanishes with $\epsilon$, play in each period induces an outcome satisfying the receiver's best-response condition and the sender's static equilibrium condition for some marginal distribution $\lambda_{h_n}$. By compactness of the finite static model, these approximate outcomes are arbitrarily close to payoffs in $\mathcal E(\mu)$. Therefore, the discounted payoff is arbitrarily close to a convex combination of payoffs in $\mathcal E(\mu)$. Hence, every uniform equilibrium payoff belongs to $\mathrm{Co} \,\mathcal E(\mu)$.

\subsection{Uniform closure under the pseudo-renewal assumption}
\label{app:pseudo-renewal-necessity}

The following proposition shows that the pseudo-renewal assumption is equivalent to the following uniform closure property: every finite Bayes plausible collection of posterior beliefs with prior \(\mu\) is closed under the Markov update.

\begin{proposition}
Let \(Q\) be a  transition matrix with full-support invariant distribution \(\mu \in \mathrm{int} \, \Delta (\Omega)\). The following are equivalent.

\begin{enumerate}
\item For every finite collection of posterior beliefs \(\{p_m\}_{m\in M}\subseteq\Delta(\Omega)\) such that
\[
\mu\in \mathrm{Co} \, \{p_m:m\in M\},
\]
the collection is closed under the Markov update:
\[
p_mQ\in \mathrm{Co}\, \{p_r:r\in M\}
\qquad\text{for every }m\in M.
\]

\item The Markov chain is pseudo-renewal.
\end{enumerate}
\end{proposition}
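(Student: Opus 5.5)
The plan is to prove both implications through the affine form of the Markov update. The direction (2)$\Rightarrow$(1) is a direct computation. For (1)$\Rightarrow$(2), we apply (1) to two-point collections, which forces the linear part of $Q$ to be a scalar.

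\textbf{(2)$\Rightarrow$(1).} Under pseudo-renewal, for any $p\in\Delta(\Omega)$,
\[
(pQ)(\omega)=\beta_\omega\bigl(1-p(\omega)\bigr)+p(\omega)Q(\omega\mid\omega),
\qquad Q(\omega\mid\omega)=1-\sum_{\omega'\neq\omega}\beta_{\omega'} .
\]
Set $\alpha=1-\sum_{\omega}\beta_\omega\in[0,1]$. The display becomes $pQ=\alpha p+\beta$. Applying this to $p=\mu$ and using invariance gives $\beta=(1-\alpha)\mu$, as in the footnote of the proof sketch. Hence $p_mQ=\alpha p_m+(1-\alpha)\mu$. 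If $\mu\in\mathrm{Co}\{p_r\}$, then $p_mQ$ is a convex combination of $p_m$ and a point of $\mathrm{Co}\{p_r\}$, so it lies in $\mathrm{Co}\{p_r\}$.

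\textbf{(1)$\Rightarrow$(2), collinearity.} Fix any $p\in\Delta(\Omega)$ with $p\neq\mu$. Because $\mu$ is in the interior of the simplex, for small $t>0$ the point $p'=(\mu-tp)/(1-t)$ lies in $\Delta(\Omega)$, and $\mu$ lies on the segment $[p,p']$. Apply (1) to the collection $\{p,p'\}$: $pQ\in[p,p']$. Since $\mu Q=\mu$, this gives $(p-\mu)Q=\alpha_p(p-\mu)$ for some scalar $\alpha_p$.

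\textbf{(1)$\Rightarrow$(2), scalar linear part.} The vectors $p-\mu$ with $p$ near $\mu$ span the hyperplane $H=\{x:\sum_\omega x(\omega)=0\}$. The map $x\mapsto xQ$ preserves $H$, and by the previous step every vector of $H$ is an eigenvector. A linear map for which every vector is an eigenvector is a scalar multiple of the identity. Hence $\alpha_p\equiv\alpha$ and
\[
pQ=\alpha p+(1-\alpha)\mu \qquad\text{for all } p\in\Delta(\Omega).
\]
Taking $p=\delta_{\tilde\omega}$ gives $Q(\omega\mid\tilde\omega)=(1-\alpha)\mu(\omega)$ for $\omega\neq\tilde\omega$. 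This depends only on the destination state, so we may set $\beta_\omega:=(1-\alpha)\mu(\omega)$, and then $\sum_\omega\beta_\omega=1-\alpha$.

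\textbf{Sign of $\alpha$ (main obstacle).} It remains to show $\beta_\omega\ge0$ and $\sum_\omega\beta_\omega\le1$, i.e.\ $\alpha\in[0,1]$. The upper bound $\alpha\le1$ follows from $Q(\omega\mid\tilde\omega)\ge0$ together with $\mu$ having full support. The lower bound $\alpha\ge0$ does not follow from stochasticity, since a slightly negative $\alpha$ can still give a valid stochastic matrix. So it must be extracted from (1), and this is the step I expect to need care. The plan is to use the collection $\{\delta_{\tilde\omega},p'\}$ with $p'=(\mu-t\delta_{\tilde\omega})/(1-t)$ and $t>0$ very small, so that $p'$ is arbitrarily close to $\mu$. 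If $\alpha<0$, the point $\delta_{\tilde\omega}Q=\alpha\delta_{\tilde\omega}+(1-\alpha)\mu$ lies on the line beyond $\mu$, at a fixed positive distance $|\alpha|\,\|\delta_{\tilde\omega}-\mu\|$ past $\mu$ on the side away from $\delta_{\tilde\omega}$. Taking $t$ small enough that $p'$ is closer to $\mu$ than this distance puts $\delta_{\tilde\omega}Q$ outside $[\delta_{\tilde\omega},p']$, contradicting (1). Hence $\alpha\ge0$, so $\beta_\omega\ge0$ and $\sum_\omega\beta_\omega\le1$, and the chain is pseudo-renewal.
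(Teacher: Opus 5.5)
Your proof is correct and follows essentially the same route as the paper: you use the two-point Bayes-plausible collections $\{p,(\mu-tp)/(1-t)\}$, show that $Q$ acts as a scalar $\alpha$ on the zero-sum hyperplane, and read off $Q(\omega\mid\tilde\omega)=(1-\alpha)\mu(\omega)$ from point masses. The only difference is the order of steps: the paper lets $t\to 0$ right away to get $pQ\in\mathrm{Co}\,\{p,\mu\}$, and hence $\alpha\in[0,1]$, whereas you first extract collinearity and then use the same shrinking-segment argument at the end to exclude $\alpha<0$; this makes explicit the check $\sum_\omega\beta_\omega\le 1$ that the paper leaves implicit.
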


\begin{proof}
Suppose first that condition 1 holds. Fix any belief \(p\in\Delta(\Omega)\). For \(0<\lambda<1\), define
\[
q_\lambda=\frac{\mu-\lambda p}{1-\lambda}.
\]
Since \(\mu\) has full support, \(q_\lambda\in\Delta(\Omega)\) for all sufficiently small \(\lambda>0\). Moreover, \(\mu=\lambda p+(1-\lambda)q_\lambda\), so the  collection \(\{p,q_\lambda\}\) is Bayes plausible with prior \(\mu\). By condition 1, this collection is closed under the Markov update, and hence
\[
pQ\in\mathrm{Co} \, \{p,q_\lambda\}.
\]
Letting \(\lambda\to0\), we obtain \(pQ\in\mathrm{Co} \, \{p,\mu\}\).

Hence, for every \(p\), there exists \(\alpha(p)\in[0,1]\) such that
\begin{equation*} 
pQ=\alpha(p)p+(1-\alpha(p))\mu.
\end{equation*}

We now show that \(\alpha(p)\) is constant. Since \(\mu Q=\mu\), writing \(p=\mu+v\), with \(\sum_{\omega}v(\omega)=0\), gives
\[
(\mu+v)Q=\alpha(p)(\mu+v)+(1-\alpha(p))\mu=\mu+\alpha(p)v,
\]
and hence \(vQ=\alpha(p)v\). Thus \(Q\) scales every zero-sum direction. By linearity, the scaling factor must be the same in all directions: otherwise, applying \(Q\) to the sum of two independent zero-sum directions would give a contradiction. Hence, there exists \(\alpha\in[0,1]\) such that
\[
pQ=\alpha p+(1-\alpha)\mu
\qquad\text{for every }p\in\Delta(\Omega).
\]

Taking \(p=\delta_{\omega}\), we get
\[
Q(\cdot\mid\omega)
=
\alpha\delta_{\omega}+(1-\alpha)\mu.
\]
Thus, for every \(\omega\neq\tilde\omega\), \(Q( \omega\mid\tilde\omega)=(1-\alpha)\mu(\omega)\), which depends only on the destination state \(\omega\). Hence \(Q\) is pseudo-renewal.

Conversely, if \(Q\) is pseudo-renewal, then \(pQ=\alpha p+(1-\alpha)\mu\) for some \(\alpha\in[0,1]\) and every \(p\). Let \(\{p_m\}_{m\in M}\) be any finite collection, and suppose that \(\mu=\sum_{r\in M}\lambda_r p_r\), where \(\lambda_r\geq 0\) and \(\sum_{r\in M}\lambda_r=1\). Then, for every \(m\),
\[
p_mQ
=
\alpha p_m+(1-\alpha)\sum_{r\in M}\lambda_r p_r
\in \mathrm{Co}\, \{p_r:r\in M\}.
\]
Thus, the collection is closed under the Markov update.
\end{proof}

\end{appendix}

\bibliographystyle{jpe}
\bibliography{JET}

\end{document}